\author{
  Maarten L\"offler\thanks{
    Department of Information and Computing Sciences, Utrecht University,
    \texttt{\{m.loffler,f.staals\}@uu.nl}
  }
  \and
  Martin N\"ollenburg\thanks{
    Institute of Theoretical Informatics, Karlsruhe Institute of Technology (KIT),
    \texttt{noellenburg@kit.edu}
  }
  \and
  Frank Staals\footnotemark[1]
}
\title{Mixed Map Labeling}
\newcommand{\myremark}[4]{\textcolor{blue}{\textsc{#1 #2:}} \textcolor{#4}{\textsf{#3}}}
\renewcommand{\myremark}[4]{}
\newcommand{\frank}[2][says]{\myremark{Frank}{#1}{#2}{SeaGreen}}
\newcommand{\maarten}[2][says]{\myremark{Maarten}{#1}{#2}{Purple}}
\newcommand{\martin}[2][says]{\myremark{Martin}{#1}{#2}{Maroon}}
\newtheorem{theorem} {Theorem}
\newtheorem{proposition}[theorem] {Proposition}
\newtheorem{lemma}[theorem] {Lemma}
\newtheorem{corollary}[theorem] {Corollary}
\newtheorem{observation}[theorem] {Observation}
\DeclareFontFamily{OT1}{pzc}{}
\DeclareFontShape{OT1}{pzc}{m}{it}{<-> s * [0.900] pzcmi7t}{}
\DeclareMathAlphabet{\mathpzc}{OT1}{pzc}{m}{it}
\newcommand{\mkmcal}[1]{\ensuremath{\mathcal{#1}}\xspace}
\newcommand{\LL}{\mkmcal{L}}
\newcommand{\T}{\mkmcal{T}}
\newcommand{\C}{\mkmcal{C}}
\newcommand{\U}{\mkmcal{U}}
\newcommand{\D}{\mkmcal{D}}
\newcommand{\NP}{\textsf{NP}}
\newcommand{\M}{\mkmcal{M}}
\newcommand{\E}{\mkmcal{E}}
\newcommand{\I}{\mkmcal{I}}
\renewcommand{\P}{\mkmcal{P}}
\newcommand{\lbl}[1]{\ensuremath{\lambda_{#1}}\xspace}
\newcommand{\ldr}[1]{\ensuremath{\gamma_{#1}}\xspace}
\newcommand{\lab}[1]{\ensuremath{\Phi(#1)}\xspace}
\renewcommand*{\@fnsymbol}[1]{\ensuremath{\ifcase#1\or *\or \mathsection\or \mathparagraph\or
   \dagger\or \ddagger\or \|\or **\or \dagger\dagger
   \or \ddagger\ddagger \else\@ctrerr\fi}}
\titleformat{\paragraph}[runin]{\bfseries}{\theparagraph}{0}{}[.]
\begin{document}
\maketitle

\begin{abstract}
  Point feature map labeling is a geometric problem, in which a set of input
  points must be labeled with a set of disjoint rectangles (the bounding boxes
  of the label texts). Typically, labeling models either use internal labels,
  which must touch their feature point, or external (boundary) labels, which
  are placed on one of the four sides of the input points' bounding box and
  which are connected to their feature points by crossing-free leader lines. In
  this paper we study polynomial-time algorithms for maximizing the number of
  internal labels in a mixed labeling model that combines internal and external
  labels. The model requires that all leaders are parallel to a given
  orientation $\theta \in [0,2\pi)$, whose value influences the geometric
  properties and hence the running times of our algorithms.
\end{abstract}

\section{Introduction}
\label{sec:Introduction}

Annotating features of interest in information graphics with textual labels or
icons is an important task in information visualization. One classical
application, whose principles easily generalize to the labeling of other
illustrations, is map labeling, where labels are mostly placed internally in
the map. Common cartographic placement guidelines demand that each label is
placed in the immediate neighborhood of its feature and that the association
between labels and features is unambiguous, while no two labels may overlap
each other~\cite{i-pnm-75,rr-plhcqmmwpi-14}. Point feature labeling has been
studied extensively in the computational geometry literature, but also in the
application areas. It is known that maximizing the number of non-overlapping
labels for a given set of input points is \NP-hard, even for very restricted
labeling models~\cite{fw-ppwalm-91,ms-ccclp-91}. In terms of labeling
algorithms, several approximations, polynomial-time approximation schemes
(PTAS), and exact approaches are
known~\cite{aks-lpmir-98,ksw-plwsl-99,cc-mir-09,km-olpfrlm-03}, as well as many
practically effective heuristics, see the bibliography of Wolff and
Strijk~\cite{ws-mlb-09}. If, however, feature points lie too dense in the map
or if their labels are relatively large, often only small fractions of the
features obtain a label, even in an optimal solution.

An alternative labeling approach using external instead of internal labels is
known as \emph{boundary labeling} in the literature. This labeling style is
frequently used when annotating anatomical drawings and technical
illustrations, where different and often small parts are identified using
labels and descriptive texts outside the actual picture, which are connected to
their features using leader lines. While the association between points and
external labels is often more difficult to see, the big advantages of boundary
labeling are that even dense feature sets can be labeled and that larger labels
can be accommodated on the margins of the illustration. Many efficient boundary
labeling algorithms are known. They can be classified by the leader shapes
that are used and by the sides of the picture's bounding box that are used for
placing the
labels~\cite{bksw-blmearm-07,bhkn-amcbl-09,bkns-blol-10,ghn-bapi-11,knrssw-tblwa
s-13,nps-dosbl-10,hpl-blflp-14}.

The combination of internal and external labeling models using internal labels
where possible and external labels where necessary seems natural and has been proposed as an open problem by Kaufmann~\cite{k-lwl-09}; however, only
few results are known in such \emph{hybrid} or \emph{mixed} settings.
In a mixed labeling, the final image will be an overlay of the original map or drawing, 
a collection of internal text labels, and a collection of leaders leading out of the image.
Depending on the application, we may wish to forbid intersections between some or all of these layers.
When no additional intersection constraints are imposed, the problem reduces to classical 
internal map labeling. Under the very natural restriction that leaders cannot
intersect internal labels, the
internal labels need to be placed carefully, as not every set of disjoint
internal labels creates sufficient gaps for routing leaders of the prescribed
shape from all remaining feature points to the image boundary. L\"offler and
N\"ollenburg~\cite{ln-sbolb-10} studied a restricted case of hybrid labeling,
where a partition of the feature points into points with internal fixed
position labels and points with external labels to be connected by one-bend
orthogonal leaders is given as input. They presented efficient algorithms and
hardness results, depending on three different problem parameters. Bekos et
al.~\cite{bkps-ctlwbl-11} studied a mixed labeling model with fixed-position
internal labels and external labels on one or two opposite sides of the
bounding box, connected by two-bend orthogonal leaders. Their goal is to
maximize the number of internally labeled points, while labeling all remaining
points externally. Polynomial and quasi-polynomial-time algorithms, as well as
an approximation algorithm and an ILP formulation were presented. 

\begin{figure}[tb]
	\centering
	\subfloat[$\theta = \pi$]{\includegraphics[scale=0.52,page=3]{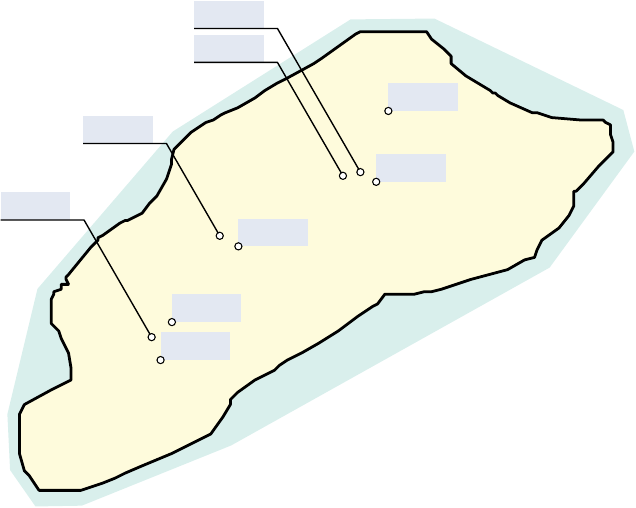}}
	\hfill
	\subfloat[$\theta=0$]{\includegraphics[scale=0.5,page=4]{motivation-example}}
	\hfill
	\subfloat[\label{sfg:expl_nice}$\theta = \pi/3$]{\includegraphics[scale=0.5,page=1]{motivation-example}}
	\caption{A sample point set with mixed labelings of three different slopes. In (a) five external labels are necessary, whereas (b) and (c) require only four external labels. The slope in (c) yields aesthetically pleasing results.}
	\label{fig:example}
\end{figure}

\paragraph{Contribution} In this paper, we extend the known results on mixed
map labeling as follows. We present a mixed labeling model, in which each point
is assigned either an axis-aligned fixed-position internal label (e.g., to the
top right of the point) or an external label connected with a leader of slope
$\theta$, where $\theta \in [0,2\pi)$ is an input parameter defining the unique
leader direction for all external labels, measured clockwise from the negative
$x$-axis (see Fig.~\ref{fig:example}). In this model, we present a new
dynamic-programming algorithm to maximize the number of internally labeled
points for any given slope $\theta$, including the left- and right-sided
case ($\theta=0$ or $\theta=\pi$), which was studied by Bekos et
al.~\cite{bkps-ctlwbl-11}. While for the right-sided case Bekos et al.\ provided a faster $O(n
\log^2 n)$-time algorithm, where $n$ is the number of input points, our
algorithm improves upon their pseudo-polynomial $O(n^{\log n+3})$-time
algorithm for the left-sided case. We solve this problem in $O(n^3(\log n +
\delta))$ time, where $\delta = \min\{n, 1/d_{\min}\}$ is the inverse of the
distance $d_{\min}$ of the closest pair of points in \P and expresses the
maximum density of \P (Section~\ref{sec:left_leaders}). In the general case it turns out that
the set of slopes can be partitioned into twelve intervals, in each of which
the geometric properties of the possible leader-label intersections are similar
for all slopes. Depending on the particular slope interval, the amount $\iota(n,\delta,\theta)$ of
``interference'' between sub-problems  varies. This
significantly affects the algorithm's performance and leads to running times between $O(n^3\log n)$ and $O(n^3(\log n +
\iota(n,\delta,\theta)))=O(n^7)$
(Section~\ref{sec:Other_Directions}). Moreover, we can use our algorithm to
optimize the number of internal labels over all slopes $\theta$ at an increase
in running time by a factor of $O(n^2)$, as is shown in
Section~\ref{sub:optimize_directions}.

%

\paragraph{Problem Statement} We are given a map (or any other illustration)
\M, which we model for simplicity as a convex polygon (this is easy to relax to larger classes of well-behaved domains), and a set \P of $n$ points in \M that
must be labeled by rectangular labels (the bounding boxes of the label texts).
In addition, we are given a leader slope $\theta \in [0,2\pi)$. For simplicity
we assume that $\theta$ is none of the slopes defined by two points in \P. We
discuss in Section~\ref{sec:obst} how to remove this restriction. There
are two choices for assigning a label to a point $p \in \P$: either we assign an
\emph{internal label} $\lbl{p}$ on \M in a one-position model, or an
\emph{external label} outside of \M that is connected to $p$ with a leader
\ldr{p}. An internal label \lbl{p} is a rectangle that is anchored at $p$ by
its lower left corner. A leader \ldr{p} is a line segment of slope~$\theta$
inside \M; it may bend to the horizontal direction outside of \M in order to
connect to its horizontally aligned label, see Fig.~\ref{sfg:expl_nice}. So in
this model, the labeling is fixed once the choice for an internal or external
label has been made for each point $p \in \P$. For a \emph{valid} label assignment we
require that (i) the internal labels do not overlap each other or the leaders,
and that (ii) the leaders themselves do not intersect each other. Figure~\ref{fig:example} shows valid mixed labelings for three different slopes.

Given a set of points $P \subseteq \P$, let $\Lambda(P) =\{ \lbl{p} \mid p \in
P\}$ denote the set of (candidate) labels corresponding to the points in $P$ and let
$\Gamma(P)=\{ \ldr{p} \mid p \in P\}$ denote the set of (candidate) leaders corresponding
to the points in $P$.
A \emph{labeling} of \P is a partition of \P into sets \I and \E, the
points in \I labeled internally, the points in \E labeled externally, such that
no two labels in $\Lambda(\I)$ intersect, no two leaders in $\Gamma(\E)$
intersect, and no label from $\Lambda(\I)$ intersects a leader from
$\Gamma(\E)$.

For ease of presentation we first assume that all labels have the same size,
which, without loss of generality, we assume to be $1 \times 1$. Hence, an
internal label $\lbl{p}$ is a unit square with its bottom left corner on $p$.
This may be a realistic model in some settings (e.g., unit-size icons as
labels), but generally not all labels have the same size. We will sketch how to
relax this restriction in Section~\ref{sec:non-square}.

Each leader \ldr{p} can be split into an \emph{inner} part (or \emph{inner leader}), which is a line
segment of slope $\theta$ from $p$ to the intersection point with the
boundary of \M, and an \emph{outer} part (or \emph{outer leader}) from the boundary of \M
to the actual label. We focus our attention on the inner leaders as they determine how \P is separated into different subinstances. Hence we can basically think of the leaders as half-lines with slope $\theta$. 
For completeness, we explain a simple method of routing the outer leaders in Section~\ref{sec:place_external}.

It is well known that in general not all points in \P can be assigned an
internal label. The corresponding label number maximization problem is
\NP-hard~\cite{fw-ppwalm-91,ms-ccclp-91}, even if each label has just one candidate position~\cite{ln-sbolb-10}. If, however, all labels have the same position (e.g., to the top left of the anchor points) and no input point may be covered by any other label, the one-position case can be solved efficiently by first discarding all labels containing an anchor point and then applying a simple greedy algorithm on the resulting staircase patterns~\cite{ln-sbolb-10}. On the other hand, it is also known
that any instance can be labeled with external labels using efficient
algorithms~\cite{bksw-blmearm-07,bks-elcs-08}. 
Mixed labelings combine both label types and sit between the two extremes of purely internal and purely external labeling~\cite{bkps-ctlwbl-11,ln-sbolb-10}.
Here we are interested in
the \emph{internal label number maximization problem}, which was first studied
for $\theta \in \{0,\pi\}$ by Bekos et al.~\cite{bkps-ctlwbl-11}: Given a map \M, a set of points \P
in \M and a slope $\theta \in [0,2\pi)$, we wish to find a valid mixed labeling that
maximizes the number $|\I|$ of internally labeled points.

\section{Leaders from the Left}
\label{sec:left_leaders}

We start with the case that $\theta = 0$, i.e., all leaders are horizontal
half-lines leading from the points to the left of \M. Our approach for
maximizing the number of internal labels is to process the points in \P from
right to left and to recursively determine the optimal rightmost unprocessed
point $p$ to be assigned an external label. Since no leader may cross any
internal label, the leader \ldr{p} decomposes the current instance left of $p$
into two (almost) independent parts, one above \ldr{p} and one below. As it
turns out, a generic subinstance can be defined by an upper and a lower leader
shielding it from the outside and additional information about at most one
point outside the subinstance. The problem is then solved using dynamic
programming.

\subsection{Geometric Properties}
\label{sub:geometric_properties}

Let $p = (p_x,p_y)$ be a point in the plane and let $L_p = \{ q \mid q_x < p_x\}$ and $R_p =
\{ q \mid q_x > p_x \}$ denote the half-planes containing all points strictly
to the left and to the right of $p$, respectively. Analogously, we define the
half-planes $T_p$ and $B_p$ above and below $p$, respectively. Let
$\overline{S(\ell,u)} = T_\ell \cap B_u$ denote the horizontal slab defined by
points $\ell$ and $u$ (with $\ell_y < u_y$), and let $S(\ell,u) = \overline{S(\ell,u)} \cap L_\ell
\cap L_u$ denote the set of points in this slab that lie to the left of both $\ell$
and $u$, see Fig.~\ref{fig:subproblem}. We define $P_{\ell,u}$ as the subset of \P in $S(\ell,u)$ including $\ell$ and $u$, i.e., $P_{\ell,u} = \P \cap (S(\ell,u) \cup \{\ell,u\})$. With some abuse of notation we will sometimes also use $L_p$, $R_p$, $T_p$, and $B_p$ to mean the subset of \P that lies in the respective half-plane rather than the entire half-plane. 

\begin{figure}[tb]
  \centering
  \subfloat[]{\includegraphics[page=1]{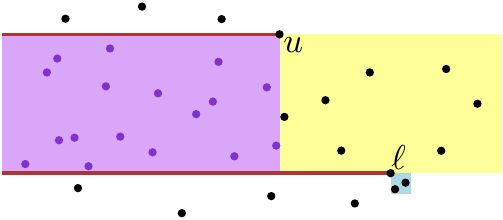}}
  \qquad
  \subfloat[\label{fig:subproblem-close}]{\includegraphics[page=5]{subproblem}}
    \caption{The slab $\overline{S(\ell,u)}$ in yellow, the region $S(\ell,u)$
    and its points from \P in purple, and the region $E(\ell)$ in blue.}
  \label{fig:subproblem}
\end{figure}


Recall that $\delta = \min\{n, 1/d_{\min}\}$ is a parameter that captures the maximum density of \P as the inverse of the smallest distance $d_{\min}$ between any two points in \P. We can use $\delta$ to bound the number of points in a unit square that may be labeled internally.
 
\begin{lemma}\label{lem:delta}
	At most $O(\delta)$ points in any unit square have a label $\lambda$ that does not contain another point in \P.
\end{lemma}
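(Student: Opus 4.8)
The plan is to show that any two of these ``special'' points inside a common unit square $U$ are incomparable with respect to the lower-left partial order, so that they form an $x$-monotone decreasing staircase, and then to bound the length of such a staircase by the minimum-distance parameter.

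First I would establish the key exclusion. Let $p$ and $q$ be two distinct points of \P in $U$, each having a label that contains no other point of \P, and suppose by relabeling that $p_x \le q_x$ and $p_y \le q_y$. Since $p$ and $q$ both lie in the unit square $U$, we have $q_x - p_x \le 1$ and $q_y - p_y \le 1$, so $q$ lies in the unit square \lbl{p}, which is anchored at $p$ by its lower-left corner. This contradicts the assumption that \lbl{p} contains no point of \P other than $p$ itself. Hence, among the special points of $U$, no one lies weakly above and weakly to the right of another; in particular their $x$-coordinates are pairwise distinct, and sorting them as $p^{(1)},\dots,p^{(k)}$ by increasing $x$-coordinate forces their $y$-coordinates to be strictly decreasing.

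Next I would bound $k$. For each consecutive pair, $\|p^{(i+1)}-p^{(i)}\| \ge d_{\min}$, and by the staircase property $p^{(i+1)}_x - p^{(i)}_x > 0$ and $p^{(i)}_y - p^{(i+1)}_y > 0$; therefore $(p^{(i+1)}_x - p^{(i)}_x) + (p^{(i)}_y - p^{(i+1)}_y) = \|p^{(i+1)}-p^{(i)}\|_1 \ge \|p^{(i+1)}-p^{(i)}\|_2 \ge d_{\min}$. Summing the telescoping left-hand side over $i = 1,\dots,k-1$ gives $(p^{(k)}_x - p^{(1)}_x) + (p^{(1)}_y - p^{(k)}_y) \ge (k-1)\,d_{\min}$, and the left-hand side is at most $2$ because all the points lie in a unit square. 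Thus $k \le 1 + 2/d_{\min}$. Combining this with the trivial bound $k \le n$ yields $k \le \min\{n,\, 1 + 2/d_{\min}\} = O(\min\{n, 1/d_{\min}\}) = O(\delta)$, as claimed.

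The only subtlety I anticipate is the degenerate case of the exclusion step: whether a point of \P lying exactly on an edge of \lbl{p} should count as ``contained,'' and whether two points may share a coordinate. Both are harmless here — they can be handled by a general-position assumption on \P (in keeping with the paper's conventions) or simply absorbed into an additive constant in the $O(\cdot)$ bound, since the statement is asymptotic — so neither affects the argument.
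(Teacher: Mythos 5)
Your proof is correct and follows essentially the same route as the paper's: both establish that the qualifying points in a unit square form a strictly descending staircase (no point weakly to the top-right of another, else it would lie in that point's label) and then invoke the minimum distance $d_{\min}$ to bound the staircase length by $O(\delta)$. Your telescoping $L_1$-distance computation merely spells out the step the paper compresses into ``we immediately obtain,'' and your remark on boundary/degeneracy conventions matches the paper's implicit general-position assumption.
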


\begin{proof}
  Let $E$ be a unit square and let $E_{\P} = E \cap \P$ be the input points in
  $E$. Since the label \lbl{p} of every point $p \in E_\P$ is a unit square
  anchored by its lower left corner at $p$, no other point $q \in \P$ may lie
  to the top-right of $p$---otherwise $p$ must be labeled externally. Hence any
  set of points in $E_\P$ whose labels do not contain another point of $\P$
  must form a sequence $p^{(1)}, p^{(2)}, \dots, p^{(k)}$ such that $p^{(i)}_x
  < p^{(j)}_x$ and $p^{(i)}_y > p^{(j)}_y$ for any $i<j$. Since the minimum
  distance of any two points is $d_{\min}$ we immediately obtain that $E_\P$
  contains at most $O(\delta)$ points that can be labeled internally.
\end{proof}

Next, we characterize which leaders or labels outside of $\overline{S(\ell,u)}$ can interfere with a potential labeling of $P_{\ell,u}$ assuming $\ell$ and $u$ are labeled externally. 

\begin{lemma}
  \label{lem:no_point_above}
  Let $\ell,u \in \P$, let $(\I',\E')$, with $\ell,u \in \E'$ be a labeling of
  $P_{\ell,u}$. There is no point in $T_u \cup B_\ell$ whose leader intersects a label from $\Lambda(\I')$ and there is no point in $T_u$ whose label
  intersects a label from $\Lambda(\I')$.
\end{lemma}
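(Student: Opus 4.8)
The plan is to use the two external leaders $\ldr{\ell}$ and $\ldr{u}$ that are already present in the labeling $(\I',\E')$ as ``blockers'': I will show that any interfering leader or label coming from outside the slab would be forced to overlap some internal label $\lbl{p}$, $p\in\I'$, at a height at which $\ldr{u}$ (or, trivially, $\ldr{\ell}$) already passes, which contradicts validity of $(\I',\E')$. First I would record the consequences of $\ell,u\in\E'$: since $\I'\subseteq P_{\ell,u}\setminus\{\ell,u\}$ we have $\I'\subseteq S(\ell,u)$, so every $p\in\I'$ satisfies $\ell_y<p_y<u_y$ and $p_x<\min\{\ell_x,u_x\}$; in particular $\lbl{p}=[p_x,p_x+1]\times[p_y,p_y+1]$ has its bottom edge strictly above height $\ell_y$ and its anchor strictly to the left of both $\ell$ and $u$.

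For a point $q\in B_\ell$ the statement is immediate: $\ldr{q}$ runs at height $q_y<\ell_y<p_y$, strictly below every label $\lbl{p}$ with $p\in\I'$, hence meets none of them. For a point $q\in T_u$ I would handle the leader case and the label case together. If $\ldr{q}$ meets some $\lbl{p}$, $p\in\I'$, then $q_y\le p_y+1$; and if $\lbl{q}$ meets $\lbl{p}$ then the intervals $[q_y,q_y+1]$ and $[p_y,p_y+1]$ meet, which together with $q_y>u_y>p_y$ again forces $q_y\le p_y+1$. Combining $q_y\le p_y+1$ with $q_y>u_y$ and $u_y>p_y$ yields $p_y<u_y<p_y+1$, i.e.\ the horizontal line at height $u_y$ crosses $\lbl{p}$; in particular $(p_x,u_y)\in\lbl{p}$.

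It remains to show $(p_x,u_y)\in\ldr{u}$, which is the heart of the proof. Since $\lbl{p}\subseteq\M$, the corner $(p_x,p_y+1)$ lies in $\M$, so by convexity of $\M$ the vertical segment from $p=(p_x,p_y)$ to $(p_x,p_y+1)$ lies in $\M$; as $p_y<u_y<p_y+1$ this gives $(p_x,u_y)\in\M$. Because $u\in\M$ too, convexity now gives that the horizontal segment between $(p_x,u_y)$ and $u$ lies in $\M$, and since $p_x<u_x$ the leftward leader $\ldr{u}$ therefore stays inside $\M$ at least until $x=p_x$, so $(p_x,u_y)\in\ldr{u}$. Then $(p_x,u_y)\in\ldr{u}\cap\lbl{p}$ with $u\in\E'$ and $p\in\I'$, contradicting that $(\I',\E')$ is a valid labeling of $P_{\ell,u}$, and the lemma follows. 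I expect the main obstacle to be exactly this geometric step — turning ``$\lbl{p}$ reaches up to height $u_y$'' into ``$\ldr{u}$ reaches left far enough to touch $\lbl{p}$'' — where convexity of $\M$ and the convention that internal labels lie inside $\M$ are essential; the rest is coordinate bookkeeping in the slab, up to checking the degenerate configurations (such as $u_y=p_y+1$ exactly, or $q$ lying on the bounding line of a half-plane) that the paper's general-position assumptions exclude.
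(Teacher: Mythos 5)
Your proof is correct and rests on the same idea as the paper's: a label of an internally labeled point $p\in\I'$ cannot cross the line $y=u_y$ without hitting $\ldr{u}$, so all labels of $\I'$ stay inside the slab $\overline{S(\ell,u)}$, while leaders of $T_u\cup B_\ell$ and labels of $T_u$ never enter it. The only difference is presentational: the paper argues forward using the half-line abstraction of leaders it set up earlier, whereas you argue by contradiction and add the convexity-of-$\M$ step to certify that the actual inner-leader segment of $u$ reaches left to $x=p_x$ --- a detail the paper deliberately suppresses by treating leaders as half-lines.
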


\begin{proof}
  In any labeling of $P_{\ell,u}$ there are no intersections between labels and leaders. 
  In particular, no label \lbl{p} for $p \in \I'$ intersects \ldr{\ell} or \ldr{u}. 
  It follows that all labels for $\I'$ lie inside the slab $\overline{S(\ell,u)}$. 
  By definition, all leaders for $\E'$ also lie inside $\overline{S(\ell,u)}$.
  Since leaders of points in $T_u \cup B_\ell$ do not intersect $\overline{S(\ell,u)}$, no such leader can intersect a label from $\I'$.
  Moreover, all labels are anchored by their bottom left corner, and hence all labels of points in $T_u$ lie above $u$ and do not intersect $\overline{S(\ell,u)}$.
  Thus, no label for a point in $T_u$ can intersect a label in $\Lambda(\I')$. 
\end{proof}

It is not true, however, that labels for points in $B_\ell$ cannot intersect labels for $P_{\ell,u}$. Still, the influence of $B_\ell$ is very limited as the next lemma shows.
Let $E(\ell)$ denote the open unit square with top-left corner $\ell$, i.e.,
$E(\ell) = R_\ell \cap B_\ell \cap L_r \cap T_b$, where $r = (\ell_x + 1,
\ell_y)$ and $b = (\ell_x,\ell_y - 1)$. See
Fig.~\ref{fig:subproblem-close}.

\begin{lemma}
  \label{lem:one_point_below}
  Let $\ell,u \in \P$, let $(\I',\E')$, with $\ell, u \in \E'$ be a labeling of
  $P_{\ell,u}$, and let $(\I'',\E'')$ denote a
  labeling of $\P \cap B_\ell \cup \{\ell\}$ with $\ell \in \E''$. There is at most one point $p \in \I''$ whose
  label may intersect a label of $\I'$, 
  and $p \in E(\ell)$. 
\end{lemma}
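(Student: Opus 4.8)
The plan is to argue that any internally labeled point $p \in \I''$ whose label $\lbl{p}$ can reach into the region occupied by labels of $\I'$ must lie in the small square $E(\ell)$, and then to use Lemma~\ref{lem:delta} together with the geometry of leaders to rule out more than one such point. First I would recall from the proof of Lemma~\ref{lem:no_point_above} that every label of $\I'$ is contained in the slab $\overline{S(\ell,u)} = T_\ell \cap B_u$, so in particular lies weakly above the horizontal line through $\ell$ and weakly to the left of $\max\{\ell_x,u_x\}$; also every label is a unit square anchored at its bottom-left corner. Hence a point $p \in \I'' \subseteq B_\ell$ can only have $\lbl{p}$ intersect some label of $\I'$ if $\lbl{p}$ pokes above the line $y = \ell_y$, which forces $p_y > \ell_y - 1$, i.e.\ $p \in \overline{S(b,\ell)}$ where $b = (\ell_x, \ell_y-1)$. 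Combined with $p \in B_\ell$ this already confines $p$ to the horizontal strip of height $1$ directly below $\ell$.

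Next I would pin down the horizontal coordinate. The leader $\ldr{\ell}$ is the horizontal half-line $\{(x,\ell_y) \mid x < \ell_x\}$; since $(\I'',\E'')$ is a valid labeling with $\ell \in \E''$, no label of $\I''$ may cross $\ldr{\ell}$, so $\lbl{p}$ for $p \in \I''$ cannot extend to the left of $\ell_x$ while staying at height $\ell_y$ — more precisely, $\lbl{p}$ intersecting the line $y=\ell_y$ at $x$-coordinates $< \ell_x$ would hit $\ldr{\ell}$. But a label of $\I'$ lies in $L_\ell \cap L_u$-ish territory only in the portion of $\overline{S(\ell,u)}$ to the left of both anchors; the relevant observation is that the labels of $\I'$ that can be reached sit just above $\ell$, and for $\lbl{p}$ to touch them without crossing $\ldr{\ell}$ we need $p_x \geq \ell_x$, and for $\lbl{p}$ (a unit square) to still reach leftward far enough we need $p_x < \ell_x + 1$. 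Together with the vertical strip from the previous paragraph, this yields $p \in R_\ell \cap B_\ell \cap L_r \cap T_b = E(\ell)$, with $r=(\ell_x+1,\ell_y)$.

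For the uniqueness, I would argue that among the points of $\I''$ lying in $E(\ell)$, at most one can have its label interact with $\I'$. The key point is that $\I''$ is part of a valid labeling, so the internal labels of $\I''$ are pairwise disjoint; by the staircase argument in the proof of Lemma~\ref{lem:delta}, the internally labeled points inside the unit square $E(\ell)$ form a strictly decreasing staircase $p^{(1)},\dots,p^{(k)}$ (increasing $x$, decreasing $y$). For such a point's label to intersect a label of $\I'$, which all lie weakly above $\ell_y$ and to the left of $\max\{\ell_x,u_x\}$, the point must be the top-left-most one of the staircase — any point further along the staircase is both lower (its label no longer reaches above $\ell_y$ far enough to meet $\I'$, being blocked by the higher label $p^{(1)}$) and further right (its label is pushed past the left-bounded region where $\I'$ lives). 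So only $p^{(1)}$ qualifies, giving at most one point $p$, and $p \in E(\ell)$ as shown.

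The main obstacle I expect is making the horizontal-coordinate bound $\ell_x \le p_x < \ell_x+1$ fully rigorous: one has to carefully combine the "no crossing of $\ldr{\ell}$" constraint with the fact that $\I'$-labels occupy only the left part of the slab, and handle the two regimes $\ell_x \le u_x$ and $\ell_x > u_x$ (which determines how far right the $\I'$-labels can reach). The uniqueness step also needs care: one must confirm that once $p^{(1)}$'s label is placed, the disjointness forces every later staircase point low enough and right enough that its unit-square label genuinely cannot reach the $\I'$-labels, rather than merely "probably" not — this is where I'd spend the detailed casework.
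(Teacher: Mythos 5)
Your containment argument is essentially the paper's: since $\ell\in\E'\cap\E''$, the labels of $\I'$ lie in the slab above the line through $\ldr{\ell}$, so a label $\lbl{p}$ with $p\in B_\ell$ can only reach them by crossing that line, which forces $p_y>\ell_y-1$; avoiding $\ldr{\ell}$ forces $p$ strictly to the right of $\ell$ (note you want $p_x>\ell_x$, not $p_x\ge\ell_x$, since a closed label with left edge on $x=\ell_x$ that crosses the line $y=\ell_y$ touches the leader at $\ell$); and since the anchors of $\I'$ lie strictly left of $\ell$ and labels have width one, $p_x<\ell_x+1$. The two ``regimes'' $\ell_x\le u_x$ versus $\ell_x>u_x$ that you worry about are irrelevant: the only fact used is $S(\ell,u)\subseteq L_\ell$, exactly as in the paper.

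The uniqueness step, however, has a genuine gap. The staircase-based exclusion you propose is not a valid argument: it is false that a second, lower-and-further-right point of $E(\ell)$ ``no longer reaches above $\ell_y$'' (every point of $E(\ell)$ has $y$-coordinate greater than $\ell_y-1$, so its label always pokes above $\ell_y$), and it is false that it is ``pushed past'' the region of $\Lambda(\I')$ (those labels extend to $x$-values arbitrarily close to $\ell_x+1$, so a point further right in $E(\ell)$ can still overlap them in $x$). Also, the staircase of Lemma~\ref{lem:delta} consists of candidate points whose labels need not be pairwise disjoint, so it does not describe $\I''\cap E(\ell)$ any more tightly than the lemma itself. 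The ingredient you never use, and which is the whole point of the paper's argument, is the disjointness of the two candidate labels \emph{against each other}: any two points of the open unit square $E(\ell)$ differ by less than one in both coordinates, so their unit labels overlap; equivalently (the paper's phrasing), every label of a point of $\I''$ that meets $\Lambda(\I')$ must cross the top side of $E(\ell)$, has width exactly one and lies strictly right of $\ell$, so two such labels would intersect, contradicting that $(\I'',\E'')$ is a labeling. With that observation the uniqueness needs no casework at all; without it, the step you defer to ``detailed casework'' would not go through as described.
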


\begin{proof}
  Since $\ell \in \E' \cap \E''$ we know that no label in $\Lambda(\I')$ or
  $\Lambda(\I'')$ intersects $\ldr{\ell}$.  Thus $\ldr{\ell}$ serves as a
  separation line between the labels $\Lambda(\I')$ and $\Lambda(\I'')$.  Let
  $\P'' \subseteq \I''$ denote the set of points whose labels intersect a label
  of $\I'$, and let $\Lambda'' = \Lambda(\P'')$ denote the corresponding set of
  labels.
  We first argue that $\P'' \subseteq E(\ell)$. Then we argue that $\P''$ can
  contain at most one point. 

  The labels in $\Lambda''$ do not intersect $\ldr{\ell}$, hence they lie strictly
  right of $\ell$. Thus, $\P'' \subseteq R_\ell$.
  All points in $\I'$ lie to the left of $\ell$, and their labels have width
  one. All labels in $\Lambda''$ intersect such a label, thus all points in
  $\P''$ must lie in $L_r$, where $r=(\ell_x+1,\ell_y)$.
  All labels in $\Lambda''$ intersect a label from a point in $\I'$. Thus, all
  labels in $\Lambda''$ intersect the horizontal line containing
  $\ldr{\ell}$. Since all labels have height one, it follows that all points in
  $\P''$ lie in $T_b$, where $b=(\ell_x,\ell_y - 1)$.
  By definition the points in $\P''$ lie in $B_\ell$. So we have $\P''
  \subseteq R_\ell \cap L_r \cap T_b \cap B_\ell = E(\ell)$.

  The labels in $\Lambda''$ are pairwise disjoint and all intersect the top
  side $s$ of $E(\ell)$. Since the length of $s$ is smaller than one, each
  label in $\Lambda''$ has width exactly one, and all labels lie in $R_\ell$ it
  follows that there can be at most one label in $\Lambda''$. Thus, there is
  also at most one point in $\P''$.
\end{proof}

From Lemma~\ref{lem:no_point_above} and Lemma~\ref{lem:one_point_below} it follows that if $\ell$ and $u$ are
labeled externally, there is at most one point $r$ below $\ell$ that can
influence the labeling of the points in $S(\ell,u)$.

\subsection{Computing an Optimal Labeling}
\label{sub:Computing_a_Labeling_left}

We define \lab{\ell,u,r}, with $\ell, u \in \P$, and $r \in E(\ell) \cup
\{\bot\}$ as the maximum number of points in $S(\ell,u)$ that can be labeled
internally, given that

\begin{enumerate}[noitemsep,label=\emph{(\roman*)}]
\item the points $\ell$ and $u$ are labeled externally,
\item all remaining points in $\overline{S(\ell,u)} \setminus S(\ell,u)$ have been labeled
  internally, and
\item point $r$ is labeled internally. If $r = \bot$ then no point in $E(\ell)$
  is labeled internally.
\end{enumerate}

See Fig.~\ref{fig:decomposition_subproblem}(a) for an
illustration. Furthermore, given $\ell$, $r$, and a point $p \in T_\ell$
we define $\varrho(p,\ell,r)$ to be the topmost point in $E(p) \cap
(T_\ell \cup \{r\})$ if such a point exists. Otherwise we define
$\varrho(p,\ell,r) = \bot$. 
%


\begin{figure}[t]
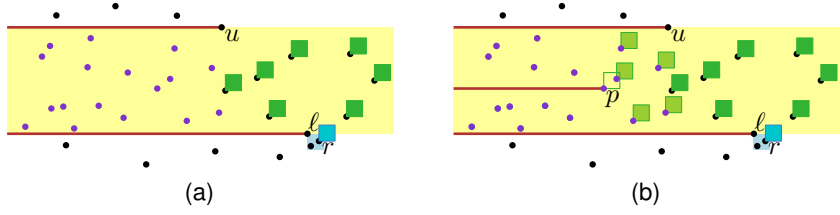

  \centering
  \subfloat[]{\includegraphics[page=2]{subproblem}}
  \qquad
  \subfloat[\label{sfg:decompose}]{\includegraphics[page=3]{subproblem}}
  \caption{(a) \lab{\ell,u,r} expresses the maximum number of points in
    $S(\ell,u)$ (purple), that can be labeled internally in the depicted
    situation. (b) The rightmost point $p$ that is labeled with an external
    label decomposes the problem into two subproblems (the orange and blue
    points).}
  \label{fig:decomposition_subproblem}
\end{figure}

\begin{lemma}
  \label{lem:combining}
  For any $\ell, u \in \P$, and $r \in E(\ell) \cup \{\bot\}$, we have that
  $\lab{\ell,u,r} = |S(\ell,u)|$, or
  $\lab{\ell,u,r} = |R_p \cap S(\ell,u)| + \lab{\ell,p,r} + \lab{p,u,r'}$,
  where $p$ is the rightmost point in $S(\ell,u)$ with an external label and $r' = \varrho(p,\ell,r)$.
%
\end{lemma}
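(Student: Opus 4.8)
The plan is to verify the recurrence by analysing the structure of an optimal labeling for the subproblem specified by $(\ell,u,r)$ and then reassembling such a labeling from solutions of the two smaller subproblems. First I would fix a labeling $(\I,\E)$ realising $\lab{\ell,u,r}$: one satisfying conditions \textit{(i)--(iii)} and maximising the number of internally labeled points of $S(\ell,u)$. If $\E \cap S(\ell,u) = \emptyset$, then every point of $S(\ell,u)$ is labeled internally and $\lab{\ell,u,r} = |S(\ell,u)|$, which is the first alternative. Otherwise let $p$ be the rightmost point of $S(\ell,u)$ with $p \in \E$; by the choice of $p$ every point of $R_p \cap S(\ell,u)$ is labeled internally. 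The inner leader $\ldr{p}$ is a horizontal half-line leaving $p$ to the left, so together with $\ldr{\ell}$ and $\ldr{u}$ it partitions the points of \P in $S(\ell,u)$ into: $\{p\}$; the points of $R_p \cap S(\ell,u)$ to the right of $p$; the points left of $p$ and above $\ldr{p}$; and the points left of $p$ and below $\ldr{p}$. By general position no point of \P other than $p$ lies on the supporting line of $\ldr{p}$, so the last two sets are exactly $\P \cap S(p,u)$ and $\P \cap S(\ell,p)$. Therefore $\lab{\ell,u,r} = |R_p \cap S(\ell,u)| + a + b$, where $a$ and $b$ count the internally labeled points of $(\I,\E)$ in $S(\ell,p)$ and in $S(p,u)$.

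For the inequality $\lab{\ell,u,r} \le |R_p \cap S(\ell,u)| + \lab{\ell,p,r} + \lab{p,u,r'}$, I would show that the restriction of $(\I,\E)$ to $P_{\ell,p}$ (resp.\ to $P_{p,u}$) is a feasible solution for $\lab{\ell,p,r}$ (resp.\ for $\lab{p,u,r'}$), which gives $a \le \lab{\ell,p,r}$ and $b \le \lab{p,u,r'}$. Conditions \textit{(i)} and \textit{(ii)} are inherited directly: $\ell,p$ (resp.\ $p,u$) are external, and every point of $\overline{S(\ell,p)} \setminus S(\ell,p)$ (resp.\ of $\overline{S(p,u)} \setminus S(p,u)$) lies in $R_p \cap S(\ell,u)$ or in $\overline{S(\ell,u)} \setminus S(\ell,u)$, hence is internal in $(\I,\E)$. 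For condition \textit{(iii)} I would invoke Lemmas~\ref{lem:no_point_above} and~\ref{lem:one_point_below}: the only point outside the relevant sub-slab that can influence its internal labeling is a single internally labeled point, lying in $E(\ell)$ resp.\ in $E(p)$. For the lower subproblem this point is still $r$. For the upper subproblem, since the instance contains no point below $\ell$ other than $r$, it must lie in $E(p) \cap (T_\ell \cup \{r\})$, and because at most one point of $E(p)$ can be labeled internally (Lemma~\ref{lem:one_point_below}) it is the topmost such candidate, i.e.\ $\varrho(p,\ell,r) = r'$.

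For the reverse inequality I would take optimal solutions of $\lab{\ell,p,r}$ and of $\lab{p,u,r'}$, additionally label $p$ externally and all points of $R_p \cap S(\ell,u)$ internally, and verify that the union is a feasible solution for $(\ell,u,r)$ of the claimed size. The leader $\ldr{p}$ separates the two sub-labelings --- by Lemma~\ref{lem:no_point_above} all internal labels of the lower (resp.\ upper) subproblem stay below (resp.\ above) the supporting line of $\ldr{p}$, the only internal label crossing this line being the one at $r'$, which both subproblems explicitly account for --- and the forced internal labels of $R_p \cap S(\ell,u)$ appear, from the point of view of either sub-slab, as part of $\overline{S(\ell,u)} \setminus S(\ell,u)$, so condition \textit{(ii)} of those subproblems already guarantees that they create no new conflict. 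I expect the main obstacle to be exactly this interface bookkeeping: checking that the label ``leaking'' across $\ldr{p}$ is unique and is consistently the point $\varrho(p,\ell,r)$ in both directions, and that the block $R_p \cap S(\ell,u)$ together with the two sub-labelings is genuinely conflict-free; Lemmas~\ref{lem:no_point_above} and~\ref{lem:one_point_below} are designed precisely to make these checks go through.
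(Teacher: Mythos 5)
Your overall plan is the same as the paper's proof: split the optimal labeling at the rightmost externally labeled point $p$, argue $|R_p\cap S(\ell,u)|$ points to the right of $p$ are forced internal, show the restriction to each sub-slab is feasible for the corresponding subproblem (giving one inequality), and glue optimal sub-solutions back together using the independence supplied by Lemmas~\ref{lem:no_point_above} and~\ref{lem:one_point_below} (giving the other). However, there is a genuine gap at the one step that is the actual crux of the lemma, namely the identification of the third parameter of the upper subproblem as $\varrho(p,\ell,r)$. You justify that the interfering internally labeled point $r'$ for $S(p,u)$ lies in $E(p)\cap(T_\ell\cup\{r\})$ by asserting that ``the instance contains no point below $\ell$ other than $r$''. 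That premise is false: nothing in the input or in the definition of \lab{\ell,u,r} excludes points of \P below $\ell$; constraint \emph{(iii)} only controls which point of $E(\ell)$ may carry an internal label. The needed argument is geometric, as in the paper: if $r'\in E(p)$ lies in $B_\ell$, then since \lbl{r'} must avoid \ldr{\ell} it lies in $R_\ell$, hence $r'\in E(p)\cap B_\ell\cap R_\ell\subseteq E(\ell)$ (using that $p$ is to the top-left of $\ell$), and only then does constraint \emph{(iii)} force $r'=r$.

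Your second inference is also incomplete: ``because at most one point of $E(p)$ can be labeled internally it is the topmost such candidate'' does not follow, since that fact by itself does not single out the topmost point of $E(p)\cap(T_\ell\cup\{r\})$. You additionally need that every point of $E(p)\cap T_\ell$ is forced internal --- it lies either in $R_p\cap S(\ell,u)$ (internal because $p$ is the rightmost external point) or in $\overline{S(\ell,u)}\setminus S(\ell,u)$ (internal by constraint \emph{(ii)}) --- so a candidate strictly above $r'$ would yield a second internally labeled point of $E(p)$, contradicting the unit-square overlap argument; this contradiction is exactly how the paper concludes $r'=\varrho(p,\ell,r)$. With this step repaired, the remainder of your proposal (inheritance of constraints \emph{(i)}--\emph{(ii)}, the lower subproblem keeping parameter $r$, and the interface bookkeeping for the gluing direction) coincides with the paper's argument.
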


\begin{proof}
  Let $(\I^*,\E^*)$ be an optimal labeling of $S(\ell,u)$ that satisfies the
  constraints \emph{(i)--(iii)} on $\lab{\ell,u,r}$, i.e.,  $\lab{\ell,u,r} = |\I^*|$. In case $\E^* = \emptyset$, we have $\lab{\ell,u,r} = |S(\ell,u)|$ and the lemma trivially
  holds. Otherwise, there must be a rightmost point $p \in \E^*$ with an
  external label. Consider the partition of $\I^*$ at point $p$ into the lower left part $B^* = B_p \cap L_p
  \cap \I^*$, the upper left part $T^* = T_p \cap L_p \cap \I^*$, and the right part $R^* = R_p \cap \I^*$, see
  Fig.~\ref{sfg:decompose}. We show that $|R^*| = |R_p \cap
  S(\ell,u)|$, $|B^*| = \lab{\ell,p,r}$, and $|T^*| =
  \lab{p,u,r'}$, which proves the lemma.

  Since $p$ is the rightmost point with an external label it follows that all
  points in $S(\ell,u)$ right of $p$ are labeled internally. Hence, $R^* = R_p
  \cap S(\ell,u)$.

  Next, we observe that $\LL_B = (B^*,S(\ell,p) \setminus B^*)$ as a sub-labeling of $(\I^*,\E^*)$ forms a valid labeling
  of $S(\ell,p)$, so by Lemma~\ref{lem:one_point_below} there is at most one
  point $\hat{r}$ below $\ell$ that can influence the labeling of
  $S(\ell,p)$. This point $\hat{r}$, if it exists, lies in $E(\ell)$. By constraint \emph{(iii)} point $r$
  lies in $E(\ell)$ or $r = \bot$ and no point in $E(\ell)$ is labeled internally, and thus $r$ can be the only point in $E(\ell)$ labeled
  internally, i.e., $\hat{r} = r$. So, we have that \emph{(i)} $\ell$ and $p$
  are labeled externally, \emph{(ii)} all points in $\overline{S(\ell,p)}
  \setminus S(\ell,p)$ are labeled internally, and \emph{(iii)} point $r$ is
  the only internally labeled point in $E(\ell)$. Thus the definition of $\Phi$ applies and we obtain $|B^*| \leq \Phi(\ell,p,r)$.

  Lemmas~\ref{lem:no_point_above} and \ref{lem:one_point_below} together imply
  that any labeling of $S(\ell,p)$ is independent from any labeling of
  $S(p,u)$. Thus, it follows that $\LL_B$ is an optimal labeling of $S(\ell,p)$
  (given the constraints), since otherwise $(\I^*,\E^*)$ could also be improved.
  Thus $|B^*| \geq \lab{\ell,p,r}$ and we obtain $|B^*|
  = \lab{\ell,p,r}$.

  Finally, we consider the upper left part $\T^*$.
  By Lemma~\ref{lem:one_point_below} there is at most one point $r'$ in $B_p$
  with an internal label that can influence the labeling of $S(p,u)$ and we have $r' \in E(p)$. 
  We need to show that $r' = \varrho(p,\ell,r)$. 
  Then the rest of the argument is analogous to the argument for $B^*$.

  We claim that $r'$ is the topmost point in $E(p) \cap (T_\ell \cup \{r\})$. 
  Assume that $r' \not\in T_\ell$, which means $r' \in B_\ell$. 
  We know that $\ldr{\ell}$ does not intersect $\lbl{r'}$ and hence $r' \in R_\ell$.
  This means that $r' \in E(p) \cap B_\ell \cap R_\ell =: X$ and since $p$ lies to the top-left of $\ell$ we have $X \subseteq E(\ell)$. 
  By definition $r$ is the only point with an internal label in $E(\ell)$ and hence $r' = r$. 
  So if $r' \ne r$ we have $r' \in E(p) \cap T_\ell$.
  Now assume that $r' \ne \varrho(p,\ell,r)$. 
  Then there is another point $q \in E(p) \cap T_\ell$ above $r'$. 
  This point $q$ must be labeled externally since no two points in $E(p)$ can be labeled internally.
  This is a contradiction since by definition $p$ is the rightmost externally labeled point in $S(\ell,u)$ and by constraint \emph{(ii)} all points in $\overline{S(\ell,u)} \setminus S(\ell,u)$ are labeled internally.
  So indeed $r' = \varrho(p,\ell,r)$ and the same arguments as for $B^*$ can be used to obtain $|T^*| = \lab{p,u,r'}$.
  %
\end{proof}

Let $\ell$, $u \in \P$, and $p \in S(\ell,u)$. We observe that $|S(\ell,p)|$
and $|S(p,u)|$ are strictly smaller than $|S(\ell,u)|$. Thus,
Lemma~\ref{lem:combining} gives us a proper recursive definition for~$\Phi$:
\begin{align*}
  \lab{\ell,u,r} = \max \big\{&\Psi(S(\ell,u)), \\
    & \max_{p \in S(\ell,u)}  \left\{\Psi(R_p \cap S(\ell,u))  + \lab{\ell,p,r} + \lab{p,u,\varrho(p,\ell,r)}\right\} \big\},
\end{align*}
%
\noindent where
\[
\Psi(P) =
\begin{cases}
  |P| & \parbox{0.7\textwidth}{if all labels in $\Lambda(P \cup \{r\} \cup
    (\overline{S(\ell,u)} \setminus S(\ell,u)))$ are pairwise disjoint, and
    their intersection with \ldr{\ell} and \ldr{u} is empty,} \\
  -\infty & \text{otherwise}.
\end{cases}
\]
 
We can now express the maximum number of points in \P that can be labeled
internally using $\Phi$. We add two dummy points to \P that we assume are
labeled externally: a point $p_\infty$ that lies sufficiently far above and
to the right of all points in \P, and a point $p_{-\infty}$ below and to the
right of all points in \P. The maximum number of points labeled internally is
then \lab{p_{-\infty},p_\infty,\bot}.

We use dynamic programming to compute
$\lab{\ell,u,r}$ for all $\ell,u \in \P \cup \{p_\infty, p_{-\infty}\}$ with $\ell_y < u_y$ and $r \in E(\ell) \cup \{\bot\}$. By finding the maximum in a set of linear size, each value $\lab{\ell,u,r}$ can be computed in $O(n)$ time, given
that the values \lab{\ell',u',r'} for all subproblems have already been
computed and stored in a table and the relevant values for the functions $\varrho$ and $f$ have been precomputed. There are $O(n)$ choices for each of $\ell$ and $u$; further there are $O(\delta)$ choices for the point $r$ given $\ell$ since $r$ is labeled internally and we know from Lemma~\ref{lem:delta} that there are at most $O(\delta)$ points in $E(\ell)$ as candidates for an internal label. This
results in an $O(n^3 \delta)$ time and $O(n^2 \delta)$ space
dynamic-programming algorithm. We show next that the preprocessing of $\varrho$ and $f$ can be done in $O(n^3 \log n)$ time. 

To compute $\lab{\ell,u,r}$ we actually have to compute $\varrho(p,\ell,r)$ and
$\Psi'(p,r) := \Psi(R_p \cap S(\ell,u))$ for all points $p \in S(\ell,u)$.
We can preprocess all points in \P in $O(n \log n)$ time, such that we can compute each
$\varrho(p,\ell,r)$ in $O(1)$ time as follows. First, we compute and store for each
point $p \in \P$ the topmost point $q_p \in \P$ in $E(p)$. This requires $n$ standard
priority range queries that take $O(n \log n)$ time in total using priority range trees with fractional cascading~\cite[Chapter 5]{bkos-cgaa-00}. 
To compute $\varrho(p,\ell,r)$ we
then check if $q_p$ lies above $\ell$. If it does, we have $\varrho(p,\ell,r) =
q_p$. Otherwise, the only candidate point for $\varrho(p,\ell,r)$ is $r$ and we
can check in $O(1)$ time if $r$ lies in $E(p)$.
This takes $O(1)$ time for each triple $(p,\ell,r)$ and $O(n^2 \delta)$ time in total. 

Next, we fix $\ell$ and $u$, and compute a representation of $\Psi'$ in $O(n
\log n)$ time, such that for each  $p \in
S(\ell,u)$ and $r \in E(\ell) \cup \{\bot \}$ we can obtain $\Psi'(p,r)$ in constant time.

We start by computing the values $\Psi'(p,\bot)$, for all $p$. We sweep a vertical
line from right to left. That is, we sort all points in $\overline{S(\ell,u)}$
by decreasing $x$-coordinate, and process the points in that order. The status
structure of the sweep line contains the number of points $N$ in $S(\ell,u)$
right of the sweep line, and a (semi-)dynamic data structure \T, which stores
the labels from the points right of the sweep line, and can report all labels
intersected by an (axis-parallel) rectangular query window. All labels are unit
squares, so \lbl{r} intersects a label \lbl{q} if and only if \lbl{r} contains
a corner point of \lbl{q}. Furthermore, we only ever insert new labels (points)
into \T, thus it suffices if \T supports only insert and query
operations. It follows that we can implement \T using a semi-dynamic range tree
using dynamic fractional cascading~\cite{mn-dfc-90}. In this data structure insertions and queries take $O(\log n)$ time.

When we encounter a new point $p$, $p \not\in \{\ell, u\}$ we test if the label
of $p$ intersects any of the labels encountered so far. We can test this using
a range query in the tree \T. If $p \in S(\ell,u)$ we also explicitly test if
\lbl{p} intersects $\ldr{\ell}$ or $\ldr{u}$. If there are no points in the
query range \lbl{p}, and \lbl{p} does not intersect \ldr{\ell} or \ldr{u} we
report $\Psi'(p,\bot) = N$, increment $N$ (if applicable), and insert the corner
points of \lbl{p} into \T. If the query range \lbl{p} is not empty, it follows
that $\Psi'(p',\bot) = -\infty$, for $p'=p$ as well as for any point to the left
of $p$. Hence, we report that and stop the sweep. Our algorithm runs in $O(n
\log n)$ time: sorting all points takes $O(n \log n)$ time, and handling each
of the $O(n)$ events takes $O(\log n)$ time.



Now consider a point $r \in E(\ell)$. We observe that for all points $p$ right
of $r$, we have that $\Psi'(p,r) = \Psi'(p,\bot) = 0$ since $r$ is right of all
points in $S(\ell,u)$. Consider the points left of $r$ ordered by decreasing
$x$-coordinate. There are two options, depending on whether or not \lbl{r}
intersects the label \lbl{p} of the current point $p$. If \lbl{r} intersects
\lbl{p}, we have $\Psi'(p,r) = -\infty$ as well as $\Psi'(p',r) = -\infty$ for all
points $p'$ left of $p$. If \lbl{r} does not intersect \lbl{p} we still have
$\Psi'(p,r) = \Psi'(p,\bot)$. We can test if \lbl{r} intersects any other label using
a range priority query with \lbl{r} in (the final version of) the range tree
\T. We need $O(\delta)$ such queries, which take $O(\log n)$ time each. This gives a total running time of $O(n \log n)$. We then conclude:


The above algorithm can
also be used when the leaders have a slope $\theta \neq 0$. However, the data
structure \T that we use is fairly complicated. In this specific case where
$\theta = 0$, we can also use a much easier data structure, and still get a
total running time of $O(n \log n)$. Instead of using the semi-dynamic range
tree as status structure, we use a simple balanced binary search tree that
stores (the end-points of) a set of vertical \emph{forbidden intervals}. When
we encounter a new point $p$, we check if $p_y$ lies in a forbidden
interval. If this is the case then \lbl{p} intersects another label. Otherwise
we can label $p$ internally. This set of forbidden intervals is easily
maintained in $O(\log n)$ time.


We use this algorithm for every pair $(\ell,u)$. Hence, after a total of $O(n^3
\log n)$  preprocessing time, we can answer $\Psi'(p,r)$ queries for any $p$ and
$r$ in constant time. This yields the following result, which improves the
previously best known pseudo-polynomial $O(n^{\log n +3})$-time algorithm of
Bekos et al.~\cite{bkps-ctlwbl-11} for the left-sided case $\theta = 0$.

\begin{theorem}
  \label{thm:left_leaders}
  Given a set \P of $n$ points, we can compute a labeling of \P that maximizes
  the number of internal labels for $\theta = 0$ in $O(n^3\log n +
  n^3 \delta)$ time and $O(n^2 \delta)$ space, where $\delta = \min \{n, 1/d_{\min}\}$ for the minimum distance $d_{\min}$ in \P.
\end{theorem}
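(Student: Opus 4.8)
The plan is to assemble Theorem~\ref{thm:left_leaders} directly from the pieces developed above: the recursive characterization of $\Phi$ in Lemma~\ref{lem:combining}, the density bound of Lemma~\ref{lem:delta}, and the running-time analyses of the dynamic program and its preprocessing phase. First I would recall that, by Lemma~\ref{lem:combining}, the quantity $\lab{\ell,u,r}$ satisfies the displayed recurrence, and that adding the two dummy points $p_{-\infty},p_\infty$ (assumed externally labeled) makes $\lab{p_{-\infty},p_\infty,\bot}$ equal to the maximum number of internally labeled points over all valid mixed labelings of $\P$. The correctness of this reduction follows because every point of $\P$ lies in $S(p_{-\infty},p_\infty)$, constraints \emph{(i)--(iii)} are vacuous there ($E(p_{-\infty})$ is empty, $r=\bot$), and Lemma~\ref{lem:combining} is exactly the optimal-substructure property justifying the dynamic program.

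Next I would count subproblems and per-subproblem work. The table is indexed by triples $(\ell,u,r)$ with $\ell,u \in \P \cup \{p_\infty,p_{-\infty}\}$, $\ell_y < u_y$, and $r \in E(\ell) \cup \{\bot\}$; there are $O(n)$ choices for each of $\ell$ and $u$, and by Lemma~\ref{lem:delta} only $O(\delta)$ admissible choices for $r$ (those candidates in $E(\ell)$ whose label contains no other point of $\P$), so the table has $O(n^2\delta)$ entries — this also gives the claimed $O(n^2\delta)$ space. Evaluating the recurrence for one entry is a maximum over $p \in S(\ell,u)$, i.e.\ over $O(n)$ candidates, each requiring $O(1)$ table lookups plus the values $\varrho(p,\ell,r)$ and $\Psi'(p,r)$; assuming these have been precomputed, each entry costs $O(n)$ time, for an $O(n^3\delta)$ total over all entries. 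It remains to fold in the preprocessing. The $\varrho$ values cost $O(n\log n)$ for the $n$ priority range queries plus $O(n^2\delta)$ to assemble them over all triples; the $\Psi'$ values are handled, for each fixed pair $(\ell,u)$, by the right-to-left sweep computing $\Psi'(p,\bot)$ in $O(n\log n)$ and then the $O(\delta)$ additional range-priority queries per point $r$, again $O(n\log n)$ per pair, hence $O(n^3\log n)$ over all $O(n^2)$ pairs $(\ell,u)$. Summing the preprocessing $O(n^3\log n)$, the $\varrho$ bookkeeping $O(n^2\delta)$, and the dynamic program $O(n^3\delta)$ gives the stated $O(n^3\log n + n^3\delta)$ bound.

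I expect the main obstacle not to be any single calculation but the careful bookkeeping that the preprocessing of $\varrho$ and $\Psi'$ is genuinely independent of $r$ except through the cheap $O(\delta)$-factor correction, so that the $\log n$ factor and the $\delta$ factor appear additively rather than multiplied together. Concretely one must verify that $\Psi'(p,r)$ for $r \ne \bot$ differs from $\Psi'(p,\bot)$ only by a possible collapse to $-\infty$ once \lbl{r} collides with some \lbl{p}, which is exactly what the paragraphs above established via the final range tree \T; and that the sweep for $\Psi'(\cdot,\bot)$ can be stopped as soon as a collision is detected without affecting correctness, because $\Psi'$ is then $-\infty$ for all points further left. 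Once these independence facts are in hand, the theorem follows by adding up the four contributions above; I would close by restating the final time and space bounds and noting the improvement over the $O(n^{\log n+3})$-time algorithm of Bekos et al.~\cite{bkps-ctlwbl-11}.
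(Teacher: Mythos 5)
Your proposal is correct and follows essentially the same route as the paper: it assembles the theorem from the recurrence of Lemma~\ref{lem:combining} with the dummy points $p_{-\infty},p_\infty$, the $O(n^2\delta)$-entry table justified by Lemma~\ref{lem:delta} with $O(n)$ work per entry, and the $O(n^3\log n)$ preprocessing of $\varrho$ and $\Psi'$ (per-pair sweep plus the $O(\delta)$ extra priority queries for $r\neq\bot$), exactly as in Section~\ref{sub:Computing_a_Labeling_left}. The independence facts you flag (that $\Psi'(p,r)$ only collapses to $-\infty$ relative to $\Psi'(p,\bot)$, and that the sweep may stop at the first collision) are indeed the points the paper relies on, so nothing is missing.
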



\section{Other Leader Directions}
\label{sec:Other_Directions}

For other leader slopes $\theta \ne 0$ we use a similar approach as before. We
consider a sub-problem $S(\ell,u)$ defined by two externally labeled points
$\ell$ and $u$. We again find the ``rightmost'' point in the slab labeled
externally. This gives us two sub-problems, which we solve recursively using
dynamic programming. However, there are four complications:

\begin{itemize}[noitemsep]
\item The region $E(\ell)$ containing the points ``below'' the slab
  $\overline{S(\ell,u)}$ that can influence the labeling of $S(\ell,u)$ is no
  longer a unit square. Depending on the orientation, it can contain more
  than one point with an internal label. See Fig.~\ref{fig:regions_E_and_F}(a).
\item In addition to the region $E(\ell)$, which contains points that can
  interfere with a subproblem from below, we now also need to consider a second
  region, which we call $F(u)$, containing points whose labels can interfere
  with a subproblem from above. See Fig.~\ref{fig:regions_E_and_F}(b).
\item The labels of points in $S(\ell,u)$ are no longer fully contained in the slab
  $\overline{S(\ell,u)}$. Hence, we have to check that they do not intersect
  with leaders of points outside $\overline{S(\ell,u)}$. See Fig.~\ref{fig:regions_E_and_F}(b).
\item The regions $E(p)$ and $F(p)$ are no longer strictly to ``the right'' of
  $p$. Hence, for some sub-problems we may have already decided (by definition
  of the sub-problem) that a point $q \in E(p)$ that lies ``left'' of $p$ is
  labeled internally. Hence, we can no longer choose $q$ to be the rightmost
  point in $S(\ell,p)$ to be labeled externally.
\end{itemize}


\begin{figure}[tb]
  \centering
  \includegraphics{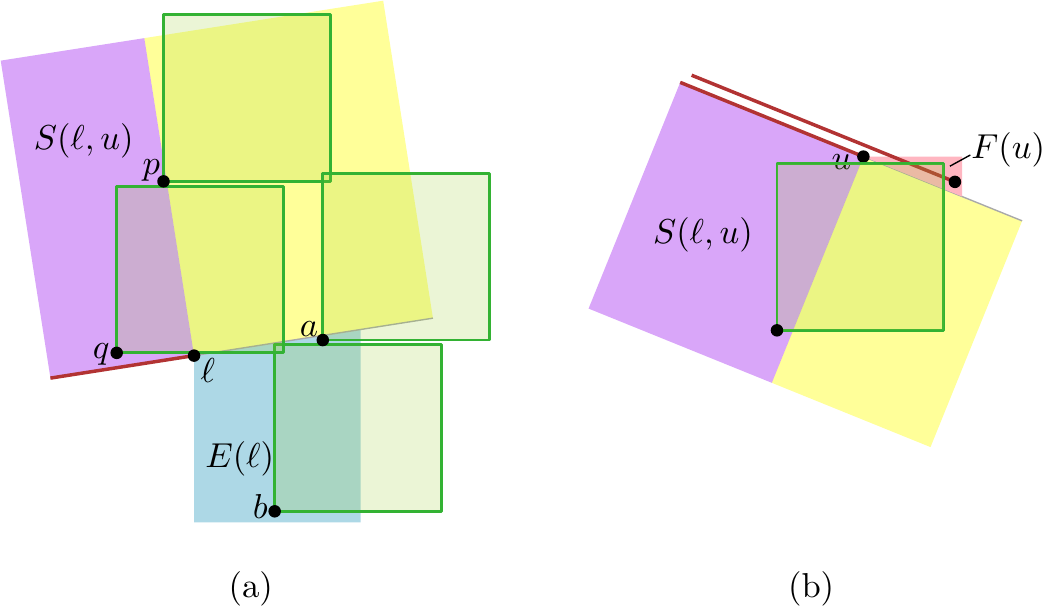}
  \caption{(a) There may be more than one point ``below'' $\ell$ with an
    internal label if the leaders arrive from the bottom left. (b) For other
    directions there is also a region $F(u)$ ``above'' the subproblem that can
    influence the labeling of $S(\ell,u)$.}
  \label{fig:regions_E_and_F}
\end{figure}

We start by explicitly finding the points in \P whose internal labels contain
other points. We are forced to label these points externally. It is easy to
find those points in $O(n^2)$ time in total. Let $\P_X$ denote this set of
points. Additionally, we spend $O(n^2)$ time to mark each point if its label
intersects $\Gamma(\P_X)$. Hence, for each point we can determine in constant
time if it intersects $\Gamma(\P_X)$. For ease of notation we will write \P to
mean the set $\P \setminus \P_X$ in the remainder of this section.
 \maarten {We
  don't need to know their leaders if we also discard any points whose internal
  labels intersect these forced leaders. If we recursively test this, we can
  really remove the points. Not sure if we want to do that, though, since it
  appears unnecessary---it might be conceptually easier to think about.}

Let $\theta$ be the given orientation for the leaders. Consider conceptually
rotating the coordinate system such that orientation $\theta$ corresponds to
the negative $x$-axis as in the previous section, and let $\tilde{B_p},
\tilde{T_p}, \tilde{L_p},$ and $\tilde{R_p}$ denote the points in \P in the
bottom, top, left, and right half-planes bounded by $p$ with respect to this
coordinate system. Analogously, we define $\overline{S(\ell,u)} =
\tilde{T_\ell} \cap \tilde{B_u}$, and $S(\ell,u) = \overline{S(\ell,u)} \cap
\tilde{L_\ell} \cap \tilde{L_u}$.

Our goal is  again to bound the number of different labelings of the points
in $\tilde{B_\ell}$ and $\tilde{T_u}$ that can influence the labeling of
$S(\ell,u)$. We will show that whether a labeling of $\tilde{B_\ell}$
influences the labeling of $S(\ell,u)$ depends only on a small subset of the
points in $\tilde{B_\ell}$. We refer to a labeling of $\tilde{B_\ell}$
restricted to those points as a \emph{configuration} of $\tilde{B_\ell}$. The
same holds for the points in $\tilde{T_u}$.

\subsection{Bounding the number of Configurations}
\label{sub:Bounding_the_number_of_Configurations}

We start by bounding the number of configurations of $\tilde{B_\ell}$. Let
$E(\ell)$ denote the \emph{bottom influence region} of $\ell$. That is, the
points in $\tilde{B_\ell}$ ``below'' the slab $\overline{S(\ell,u)}$ whose
label can intersect a label of a point in $S(\ell,u)$. Fig.~\ref{fig:region_E}
shows the regions $E(\ell)$ for various orientations of the leaders.

Similarly, we can define a region $E'(\ell) \subset \tilde{B_\ell}$ such that
the \emph{leaders} of points in $E'(\ell)$ can intersect a label of a point in
$S(\ell,u)$.

\begin{lemma}
  \label{lem:sizes_E}
  For a subproblem $S(\ell,u)$ the size of the bottom influence region
  $E(\ell)$ is at most $1 \times e(\theta)$ or $e(\theta) \times 1$, where

  \hspace{-1cm}
  \begin{tabular}{>{\centering\arraybackslash}m{0.34\textwidth}
                  >{\centering\arraybackslash}m{0.35\textwidth}
                  >{\centering\arraybackslash}m{0.3\textwidth}}
  $\displaystyle
  e(\theta) \leq
  \begin{cases}
    1 & \text{if } \theta = 0 \\
    2 & \text{if } \theta \in (0,\pi/4) \\
    1 & \text{if } \theta \in [\pi/4,\pi/2) \\
    0 & \text{if } \theta = \pi/2  \\
    1 & \text{if } \theta \in (\pi/2,3\pi/4) \\
  \end{cases}
  $
  &
  $\displaystyle
  e(\theta) \leq
  \begin{cases}
    0 & \text{if } \theta \in [3\pi/4,5\pi/4] \\
    1 & \text{if } \theta \in (5\pi/4,3\pi/2) \\
    0 & \text{if } \theta  = 3\pi/2 \\
    3 & \text{if } \theta \in (3\pi/2,7\pi/4) \\
    2 & \text{if } \theta \in [7\pi/4,2\pi). \\
  \end{cases}
  $
  &
  \includegraphics[page=1]{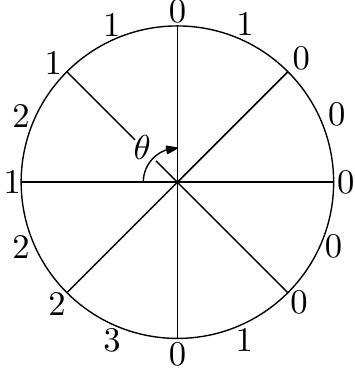}
  \end{tabular}
\end{lemma}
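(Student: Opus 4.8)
The plan is to reduce the statement to an elementary packing fact about two unit squares that straddle the leader line of $\ell$, and then verify it interval by interval. Work in the rotated frame of this section, so that leaders point in the negative $\tilde x$-direction, and write $\Lambda_\ell$ for the line of direction $\theta$ through $\ell$; the inner leader \ldr{\ell} is a ray of $\Lambda_\ell$ emanating from $\ell$. Recall that two labels, being unit squares anchored at their lower-left corners, meet if and only if their anchors lie at $L^\infty$-distance strictly less than $1$. By definition a point $p \in E(\ell)$ lies in $\tilde B_\ell$, and \lbl{p} meets \lbl{q} for some $q \in S(\ell,u)$; in any valid labeling with $\ell$ external, neither \lbl{p} nor \lbl{q} may cross \ldr{\ell}, and since $\P_X$ has already been removed, neither label contains $\ell$. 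Hence $E(\ell)$ is contained in the set of points $p$ such that (i) $p$ lies on the ``below'' side of $\Lambda_\ell$, (ii) \lbl{p} is disjoint from the ray \ldr{\ell}, and (iii) there is a point $q \in \tilde T_\ell \cap \tilde L_\ell$ (that is, on the ``above'' side of $\Lambda_\ell$ and on the leader side of the line through $\ell$ perpendicular to the leaders) with \lbl{q} disjoint from \ldr{\ell} and $\|p-q\|_\infty < 1$.

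From this I would extract two one-dimensional bounds. The wedge $\tilde T_\ell \cap \tilde L_\ell$ in which $q$ must lie is bounded by the two rays from $\ell$ along $\Lambda_\ell$ and along its perpendicular, and for each fixed $\theta$ one checks that this wedge is contained in one of the four open coordinate half-planes with boundary through $\ell$ (for instance $\{y > \ell_y\}$ when $\theta \in (0,\pi/4)$). Together with $\|p-q\|_\infty < 1$ this already confines $E(\ell)$ to an open strip of width $1$ in the corresponding coordinate. The bound in the other coordinate comes from the leader ray: condition (ii) pins $p$ to one side of a coordinate line through $\ell$, and then the requirement that \lbl{p} reach a label whose anchor stays in the wedge and whose square avoids \ldr{\ell} gives a linear inequality on $p$ whose extremal value is a trigonometric expression in $\theta$ (for $\theta \in (0,\pi/4)$ it works out to $(1+\tan\theta)\cos^2\theta \le \tfrac{1+\sqrt2}{2} < 2$). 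Bounding these expressions above by the integers in the table closes each case, and the interval endpoints are exactly the slopes at which such an expression crosses an integer; for instance $\cos^2\theta + \sin\theta\cos\theta \le 1$ precisely for $\theta \ge \pi/4$, which is why the first threshold is $\pi/4$.

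It then remains to carry out this argument in each slope interval of the statement. The only thing that varies between intervals is the combinatorial position of the leader direction relative to the axes --- the sign of the slope of $\Lambda_\ell$, which side of $\Lambda_\ell$ is the slab side, and on which side of $\ell$ the ray \ldr{\ell} and the wedge for $q$ lie --- and this is what decides whether the bounding box comes out as $1 \times e(\theta)$ or $e(\theta) \times 1$. In the degenerate intervals the conditions that \lbl{p} and \lbl{q} avoid \ldr{\ell} become decisive: for $\theta \in [3\pi/4, 5\pi/4]$ and for $\theta \in \{\pi/2, 3\pi/2\}$, the ray \ldr{\ell} together with the half-planes containing $p$ and $q$ respectively push \lbl{p} and \lbl{q} into opposite closed half-planes bounded by an axis-parallel line through $\ell$, so they cannot meet and $E(\ell) = \emptyset$, i.e., $e(\theta) = 0$. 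I expect this case analysis to be the main obstacle: no single step is deep, but for every interval one must set up the right wedge for $q$ and the right forbidden region around \ldr{\ell} for $p$, keep track of whether \lbl{q} (or \lbl{p}) may dip across $\Lambda_\ell$ on the far side of $\ell$ --- which is where the larger values $e(\theta) = 2$ and $e(\theta) = 3$ come from --- and do the extremal trigonometric bookkeeping without losing a constant.
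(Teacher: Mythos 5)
Your plan is essentially the paper's own proof: the same characterization of $E(\ell)$ (points of $\tilde{B_\ell}$ whose label avoids \ldr{\ell} and lies within $L^\infty$-distance one of a label of the slab side that also avoids \ldr{\ell}), followed by the same interval-by-interval case analysis in which one coordinate is pinned to a unit strip and the other is bounded by a trigonometric extremum --- your value $\cos^2\theta(1+\tan\theta)\le(1+\sqrt{2})/2$ for $\theta\in(0,\pi/4)$ is exactly the extremal vertex (intersection of the boundaries of $\tilde{B_\ell}$ and $\tilde{L_{r'}}$) that the paper uses, only bounded more sharply than its stated ``$2$''. One small imprecision: the wedge-containment-plus-$L^\infty$ step by itself yields only one side of the unit strip (e.g.\ $p_y>\ell_y-1$), while the opposite side needs $p\in\tilde{B_\ell}$ together with the ray-avoidance condition (this is where the paper's $E(\ell)\subset R_\ell$ comes from); since both conditions appear in your characterization and in your per-case bookkeeping, this is a matter of wording rather than a gap.
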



\begin{figure}[tbp]
  \centering
  \includegraphics[width=0.24\textwidth,page=1]{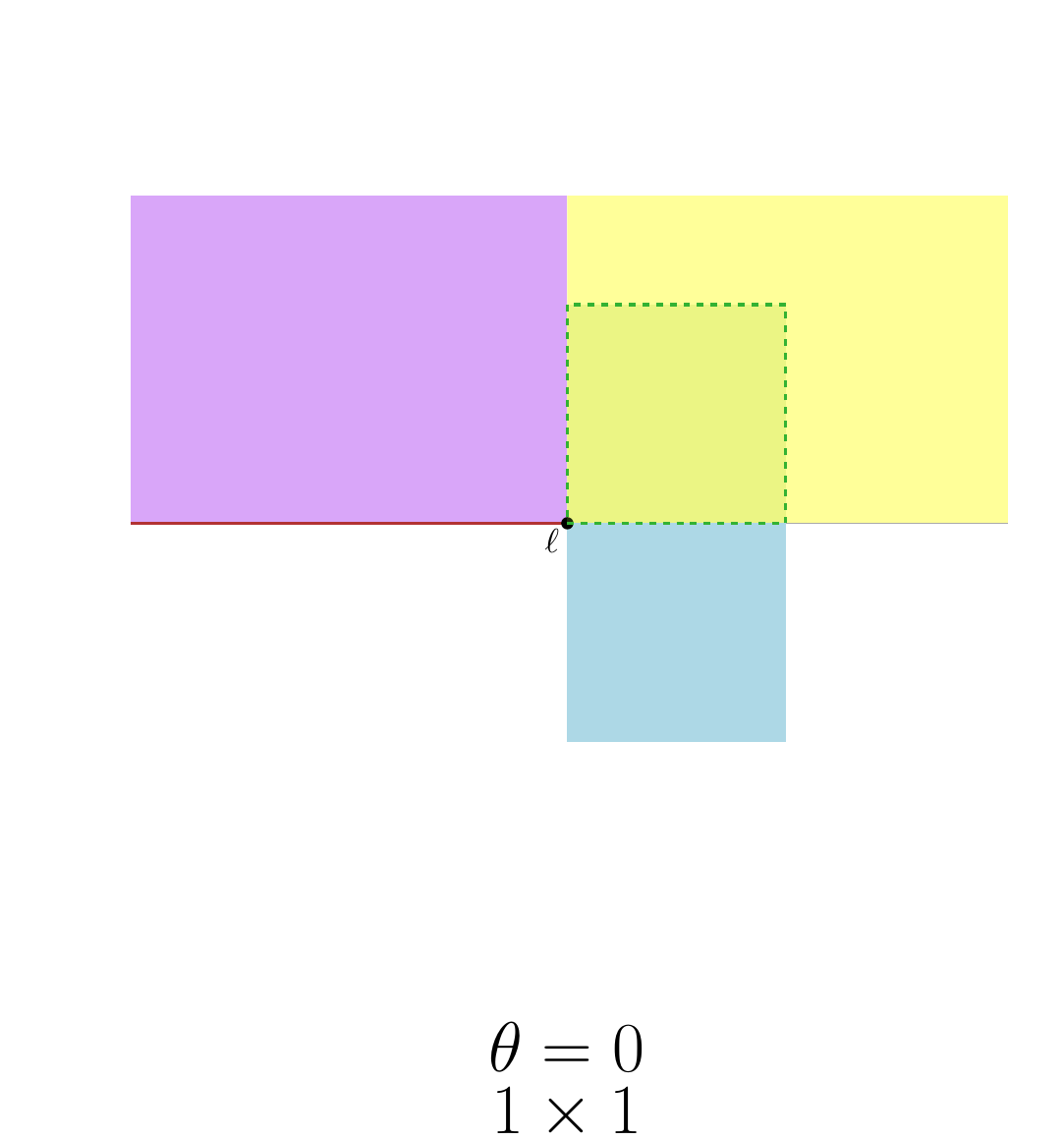}
  \includegraphics[width=0.24\textwidth,page=2]{bottom_region_E}
  \includegraphics[width=0.24\textwidth,page=3]{bottom_region_E}
  \includegraphics[width=0.24\textwidth,page=4]{bottom_region_E}
  \\
  \includegraphics[width=0.24\textwidth,page=5]{bottom_region_E}
  \includegraphics[width=0.24\textwidth,page=6]{bottom_region_E}
  \includegraphics[width=0.24\textwidth,page=7]{bottom_region_E}
  \includegraphics[width=0.24\textwidth,page=8]{bottom_region_E}
  \\
  \includegraphics[width=0.24\textwidth,page=9]{bottom_region_E}
  \includegraphics[width=0.24\textwidth,page=10]{bottom_region_E}
  \includegraphics[width=0.24\textwidth,page=11]{bottom_region_E}
  \includegraphics[width=0.24\textwidth,page=12]{bottom_region_E}
  \caption{The region $E(\ell)$ (blue) with respect to sub-problem $S(\ell,u)$
    (purple), depending on the orientation $\theta$ of the leaders. For each
    orientation we give an upper bound on the size of $E(\ell)$.}
  \label{fig:region_E}
\end{figure}

\begin{proof}
  We prove this by case distinction on $\theta$.

  \begin{description}[leftmargin=1em,style=sameline,font=\textbf,noitemsep,nosep]
    \item[case $\theta = 0$.] See Lemma~\ref{lem:one_point_below}.
    \item[case $\theta \in (0,\pi/4)$.]
      \maarten [suggests] {Perhaps refer directly to Figure 6(b)? And the same for each case?}
       All labels in $\Lambda(S(\ell,u))$ lie in
      $T_\ell$ and in $\tilde{L_r}$, where $r = (\ell_x+1,\ell_y)$. It follows that
      $E(\ell) \subseteq T_{\ell'} \cap \tilde{L_{r'}}$, where $\ell' =
      (\ell_x,\ell_y-1)$ and $r'=(r_x,r_y-1)$. Furthermore, it is easy to see
      that $E(\ell) \subset R_\ell$, and that $E(\ell) \subset
      \tilde{B_\ell}$. Hence, $E(\ell) \subset T_{\ell'} \cap R_\ell \cap
      \tilde{L_{r'}} \cap \tilde{B_\ell}$.

      Since the leaders are sloped downwards it follows that the height of
      $E(\ell)$ is at most one. The maximum width of $E(\ell)$ is realized by
      $\ell$ and the intersection point $p$ of the lines bounding
      $\tilde{B_\ell}$ and $\tilde{L_{r'}}$. Using that $\theta \in (0,\pi/4)$
      basic trigonometry shows that the width is at most two.
    \item[case $\theta \in [\pi/4,\pi/2)$.] Similar to the previous
      case. However, now the width is determined by the intersection between
      $T_{\ell'}$ and $\tilde{B_\ell}$. From basic trigonometry it then follows
      that the width is at most one.
    \item[case $\theta = \pi/2$.] For labels from $E(\ell)$ to intersect
      $\Lambda(S(\ell,u))$ we need $E(\ell) \subset T_{\ell'}$, where
      $\ell'=(\ell_x,\ell_y-1)$. However, to avoid intersecting \ldr{\ell} we
      need $E(\ell) \subset B_{\ell'}$. It follows that $E(\ell) = \emptyset$.
    \item[case $\theta \in (\pi/2,3\pi/4)$.] Using similar arguments as before
      it follows that $E(\ell) \subset B_{\ell'} \cap \tilde{L_{\ell'}} \cap
      \tilde{B_\ell} \cap L_\ell \cap R_{\ell'}$, where $\ell' =
      (\ell_x-1,\ell_y-1)$. Since $E(\ell) \subset L_\ell \cap R_{\ell'}$ the
      width is at most one. Basic trigonometry again shows that the height is
      at most one.
    \item[case $\theta \in {[3\pi/4,5\pi/4]}$.] For the sub-case $\theta \in
      [3\pi/4,\pi)$ the lines bounding $\tilde{L_{\ell}}$ and $R_{\ell'}$, with
      $\ell'=(\ell_x-1,\ell_y-1)$ intersect below the line containing
      \ldr{\ell}. We then obtain $E(\ell) \subset R_{\ell'} \cap B_{\ell'} \cap
      \tilde{B_\ell} = \emptyset$. In the remaining sub-case $\theta \in
      [\pi,5\pi/4]$ the regions $\Lambda(S(\ell,u))$ and
      $\Lambda(\tilde{B_\ell})$ are disjoint. It follows that $E(\ell)$ is empty.
    \item[case $\theta \in {(5\pi/4,3\pi/2)}$.] Point $\ell$ is now the point
      with the maximum $y$-coordinate. It then follows that all labels of
      $S(\ell,u)$ lie in $B_{\ell'}$, where $\ell'=(\ell_x,\ell_y+1)$. Hence,
      we also get $E(\ell) \subset B_{\ell'}$. The labels of $S(\ell,u)$ do not
      intersect \ldr{\ell}, hence they are contained in $L_\ell \cup
      \tilde{T_\ell}$. We then have $E(\ell) \subset \tilde{B_\ell} \cap
      (L_\ell \cup \tilde{T_\ell}) = \tilde{B_\ell} \cap L_\ell$. Since $\theta \in
      (3\pi/4,3\pi/2)$ it now follows that the height and width are both at
      most one.
    \item[case $\theta = 3\pi/2$.] All labels from $S(\ell,u)$ lie in $L_\ell$,
      all labels from $\tilde{B_\ell} = R_\ell$ lie in $R_\ell$. Hence,
      $E(\ell) = \emptyset$.
    \item[case $\theta \in (3\pi/2,7\pi/4)$.] It is again easy to show that
      $E(\ell) \subset R_\ell \cap L_r$, where $r=(\ell_x+1,\ell_y+1)$, and
      thus has width (at most) one. All labels from points in $S(\ell,u)$ have
      width and height one, and are thus contained in
      $\tilde{L_r}$. Furthermore, they do not intersect \ldr{\ell}, from which
      we obtain that they are contained in $\tilde{T_\ell} \cup T_\ell$. From
      the former we get that $E(\ell) \subset
      \tilde{L_r}$. From the latter
      we get that $E(\ell) \subset \tilde{T_{\ell'}} \cup T_{\ell'}$, where
      $\ell'=(\ell_x,\ell_y-1)$. Hence, we obtain $E(\ell) \subset R_\ell \cap
      L_r \cap \tilde{L_r} \cap (\tilde{T_{\ell'}} \cup T_{\ell'}) \cap
      \tilde{B_\ell}$.

      Since $\theta \in (3\pi/2,7\pi/4)$ the height of $E(\ell)$ is determined
      by $\ell'$ and the intersection point $p$ between $\tilde{B_\ell}$ and
      $\tilde{L_r}$. Trigonometry now shows that the height is at most three.
    \item[case $\theta \in [7\pi/4,2\pi)$.] Similar to the previous case we get
      a width of at most one. The height is now determined by $\ell'$ and the
      intersection of $\tilde{B_\ell}$ and $R_r$. Since $\theta \in
      [7\pi/4,2\pi)$ the height is at most two.
  \end{description}
  \vspace{-\baselineskip}
\end{proof}

\begin{corollary}
  \label{cor:at_most_KE_labeled_internally}
  There can be at most $e(\theta)$ points in $E(\ell)$ labeled internally
  such that their labels are disjoint.
\end{corollary}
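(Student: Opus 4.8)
The plan is to read Corollary~\ref{cor:at_most_KE_labeled_internally} off Lemma~\ref{lem:sizes_E} by a short packing argument that generalizes the last paragraph of the proof of Lemma~\ref{lem:one_point_below}: internal labels are pairwise disjoint axis-aligned unit squares, and a box that is ``thin'' in one direction (extent at most $1$) and of extent at most $e(\theta)$ in the other can accommodate at most $e(\theta)$ of them, stacked along the long direction.

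In detail, fix a subproblem $S(\ell,u)$ and let $P' \subseteq E(\ell)$ be a set of internally labeled points whose labels $\{\lbl{p} \mid p \in P'\}$ are pairwise disjoint. By Lemma~\ref{lem:sizes_E}, $E(\ell)$ is contained in an axis-aligned box $B$ one of whose sides has length at most $1$; by symmetry assume that side is horizontal and the other has length at most $e(\theta)$. Since $E(\ell)$ is an open region (exactly as in the $\theta=0$ case, where $E(\ell)$ is the open unit square), the points of $P'$ lie strictly inside $B$, so any two of them differ by strictly less than $1$ in their $x$-coordinate; hence the (length-$1$) horizontal projections of any two labels $\lbl{p},\lbl{q}$ with $p,q\in P'$ overlap. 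As $\lbl{p}$ and $\lbl{q}$ are disjoint, their vertical projections must then be disjoint, i.e., $|p_y - q_y| \ge 1$ for all $p,q \in P'$. Order $P' = \{p^{(1)},\dots,p^{(k)}\}$ by increasing $y$-coordinate; then $p^{(i+1)}_y - p^{(i)}_y \ge 1$ for each $i$, so $p^{(k)}_y - p^{(1)}_y \ge k-1$. But $p^{(1)}$ and $p^{(k)}$ lie strictly inside $B$, whose vertical extent is at most $e(\theta)$, so $k-1 < e(\theta)$; since $e(\theta)$ is an integer this gives $k \le e(\theta)$. For the orientations with $e(\theta) = 0$ the region $E(\ell)$ is empty by Lemma~\ref{lem:sizes_E}, so the statement holds trivially.

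The step I expect to need the most care is the strictness at the very end: with a \emph{closed} bounding box the same computation only yields $k \le e(\theta)+1$, which is off by one. The extra mileage comes from $E(\ell)$ being an open region, so that the two extreme points of $P'$ cannot both lie on the boundary of $B$ and hence their $y$-coordinates span strictly less than the vertical extent of $B$; the same remark also justifies the strict $<1$ bound on $x$-differences used above. Establishing this strict form of the containment supplied by Lemma~\ref{lem:sizes_E} is the crux; the remaining steps are routine consequences of that lemma and of the labels being pairwise disjoint unit squares.
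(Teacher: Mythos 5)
Your argument is correct and is essentially the paper's own (implicit) justification: the paper states this corollary without proof, as an immediate consequence of Lemma~\ref{lem:sizes_E}, and the intended reasoning is precisely your packing argument that a region with one side of extent at most $1$ forces the unit-square labels to stack along the other direction, of extent at most $e(\theta)$. Your explicit treatment of the strictness issue (via the openness of $E(\ell)$, which follows from its definition through strict half-plane inequalities) fills in the only delicate point the paper glosses over.
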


Next, we turn our attention to the points in $E'(\ell)$ whose leader can
intersect a label of a point in $S(\ell,u)$. A leader of a point in
$\tilde{B_\ell}$, and thus in $E'(\ell)$, can intersect a label of $S(\ell,u)$
only if the labels intersect $\tilde{B_\ell}$. This happens only if $\theta \in
(5\pi/4,3\pi/2)$ or $\theta \in (3\pi/2,2\pi)$. See Fig.~\ref{fig:region_E}. In
the former case we thus have $E'(\ell) = \tilde{B_\ell} \cap \tilde{T_{\ell'}}
\cap L_\ell$, where $\ell'=(\ell_x,\ell_y + 1- \tan
(\theta-5\pi/4))$, and in the latter case we have $E'(\ell) = \tilde{B_\ell}
\cap \tilde{T_z} \cap T_z$, where $z=(\ell_x+1,\ell_y)$. See
Fig.~\ref{fig:leaders_in_bottom}.

We now note that if we label the set $Q \subseteq E(\ell)$ internally, then all
other points in $E(\ell)$ are labeled externally. Hence, if the leaders of the
remaining points (e.g. those in $E(\ell)\setminus Q$) intersect with labels of
points in $S(\ell,u)$, this is already captured by the configuration involving
$Q$. Therefore, the points in $E(\ell)\setminus Q$ themselves do not define new
configurations. Similarly, the points in $\P_X$ do not define any new
configurations.

The points in $E'(\ell)$ that lie outside of $E(\ell)$ can still be labeled
both internally or externally. Let $E''(\ell) = E'(\ell) \setminus E(\ell)$
denote the region containing these points. We now observe:

\begin{figure}[t]
  \centering
  \includegraphics{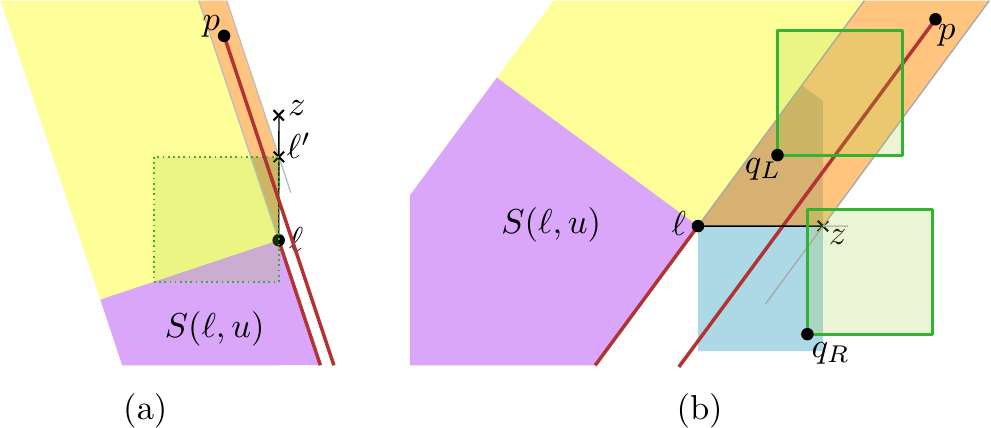}
  \caption{The region $E'(\ell)$ (in pink) for the cases $\theta \in
    (5\pi/4,3\pi/2)$ (a) and $\theta \in (3\pi/2,2\pi)$ (b). In both cases the
    leader \ldr{p} of a point $p \in E''(\ell)$ intersects the line segment
    $\overline{\ell{}z}$ and thus subdivides $E(\ell)$ into a left region $L$
    and a right region $R$.}
  \label{fig:leaders_in_bottom}
\end{figure}

\begin{lemma}
  \label{lem:leaders_in_E''}
  Let $Q$ be the set of points in $E(\ell)$ labeled internally, and let $p \in
  E''(\ell)$ be labeled externally. For all points $q \in Q$ we have: $q$
  intersects the leader \ldr{p}, or if there is a label \lbl{a}, $a \in
  S(\ell,u)$, that intersects \lbl{q}, then it also intersects the leader
  \ldr{p}.
\end{lemma}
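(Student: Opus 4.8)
The plan is to use the leader $\ldr p$ as a cut through $E(\ell)$, exactly as pictured in Fig.~\ref{fig:leaders_in_bottom}: $\ldr p$ enters $E(\ell)$ across the segment $\overline{\ell z}$, splits it into $L$ and $R$, and---continued past that crossing in direction $\theta$---runs straight into the part of the slab from which the labels of the points of $S(\ell,u)$ can reach down into $E(\ell)$. Thus $\ldr p$ is a barrier between one of the two pieces of $E(\ell)$ and all of those labels, and any overlap of an internal label anchored in that piece with a label of $S(\ell,u)$ is forced to cross $\ldr p$.

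First I would dispose of the trivial slopes. As observed just before the lemma, a leader of a point of $\tilde{B_\ell}$ can meet a label of $S(\ell,u)$ only when $\theta \in (5\pi/4,3\pi/2)$ or $\theta \in (3\pi/2,2\pi)$, so for every other $\theta$ the region $E'(\ell)$---and hence $E''(\ell)$---is empty and there is nothing to prove. So assume $\theta$ lies in one of these two intervals, and recall the explicit descriptions of $E(\ell)$ and $E'(\ell)$ derived there, together with the fact (Lemma~\ref{lem:sizes_E} and its proof) that $E(\ell)$ has width or height at most $1$ and has the segment $\overline{\ell z}$, for the appropriate endpoint $z$, as one of its sides. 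I would then verify the configuration of Fig.~\ref{fig:leaders_in_bottom}: for $p \in E''(\ell)$ the half-line $\ldr p$ crosses $\overline{\ell z}$ in a single point $m$. Indeed, $p \in E'(\ell)$ means $\ldr p$ contains a point of some $\lbl a$ with $a \in S(\ell,u)$; such labels lie on the far side of the line supporting $\overline{\ell z}$ from $p$; and $\ldr p$ is a segment of the fixed slope $\theta$, so it meets that line exactly once, and a short computation with the size bound on $E(\ell)$ places the crossing on the segment itself. This gives the subdivision into $L$ (the piece meeting the $\ell$-end of $\overline{\ell z}$) and $R$.

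The heart of the proof---and the step I expect to be the real work---is the geometric claim that the line supporting $\ldr p$ separates one piece, say $R$, from every label $\lbl a$ with $a \in S(\ell,u)$: any such label that meets $E(\ell)$ lies weakly on the $L$-side of $\ldr p$, and a unit label anchored at a point of the $L$-piece cannot meet such an $\lbl a$ unless its anchor lies on $\ldr p$. I would establish this by plane trigonometry, separately for $\theta \in (5\pi/4,3\pi/2)$ and $\theta \in (3\pi/2,2\pi)$, from three facts: $\ldr p$ continued past $m$ runs in direction $\theta$ into precisely the part of the slab where labels of $S(\ell,u)$ can sit (this is exactly what places $p$ in $E'(\ell)$); $|\overline{\ell z}| \le 1$, so a unit label overlapping $\overline{\ell z}$ must contain one of its endpoints and hence covers $\overline{\ell z}$ from that endpoint up to its point of entry; and every label is a unit square anchored at its bottom-left corner, which fixes from which side the labels of $S(\ell,u)$ approach $\overline{\ell z}$. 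Feeding these into the half-plane descriptions used in the proof of Lemma~\ref{lem:sizes_E} yields the separation. The delicate part is getting this bookkeeping right in each interval and, in particular, identifying---consistently with Fig.~\ref{fig:leaders_in_bottom}---which endpoint of $\overline{\ell z}$ the labels of $S(\ell,u)$ approach, hence which piece is shielded.

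Granting the claim, the lemma follows quickly. Fix $q \in Q$. If $q$ lies on $\ldr p$, the first alternative holds. Otherwise $q$ lies strictly in one piece of $E(\ell)$; if it lies strictly in the $L$-piece, the second half of the claim says $\lbl q$ meets no label of $S(\ell,u)$ and the implication is vacuously true, so assume $q$ lies strictly in $R$ and hence strictly on one side of the line through $\ldr p$. Suppose some $\lbl a$ with $a \in S(\ell,u)$ meets $\lbl q$, and pick $x \in \lbl a \cap \lbl q$; note $x \ne q$, for otherwise $q$ would be a point of $\lbl a \cap E(\ell)$ strictly on the $R$-side, contradicting the claim. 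By the claim $\lbl a$, hence $x$, lies weakly on the $L$-side of the line through $\ldr p$, while $q$ lies strictly on the $R$-side; since $\lbl q$ is convex and contains both $q$ and $x$, it meets that line. This crossing lies in the part of the line swept by $\lbl q$, which is within unit distance of $E(\ell)$ and beyond $p$ along direction $\theta$, hence on the half-line $\ldr p$ itself; thus $\lbl q \cap \ldr p \ne \emptyset$, which is the second alternative. The only non-routine point here is that the crossing lies on $\ldr p$ and not on its extension, which follows from the position of $p$ relative to $E(\ell)$ in direction $\theta$.
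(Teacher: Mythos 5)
The overall plan (cut $E(\ell)$ with \ldr{p} along $\overline{\ell z}$ into $L$ and $R$, and argue per piece) is the paper's, and your treatment of the $R$-piece is essentially the paper's second half. But there is a genuine gap in how you handle the other piece. Your key claim~(b) --- that a unit label anchored strictly in the $L$-piece cannot meet any label \lbl{a} with $a \in S(\ell,u)$ unless its anchor lies on \ldr{p} --- is false, and with it the ``vacuously true'' dispatch of all $q$ strictly inside $L$ collapses. Concretely, take $\theta \in (3\pi/2,7\pi/4)$ with leader slope $2$, $\ell=(0,0)$, $z=(1,0)$, and $p=(1.4,1)\in E''(\ell)$, so \ldr{p} crosses $\overline{\ell z}$ at $(0.9,0)$. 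The point $q=(0.1,-0.5)$ lies strictly in $L$ and in $E(\ell)$, and its label $[0.1,1.1]\times[-0.5,0.5]$ intersects the label of $a=(-0.8,0.2)\in S(\ell,u)$ (whose label $[-0.8,0.2]\times[0.2,1.2]$ avoids \ldr{\ell}); moreover \lbl{a} does \emph{not} meet \ldr{p}. So for this $q$ the second disjunct's hypothesis holds and its conclusion fails; the lemma is rescued only because \lbl{q} itself crosses \ldr{p} --- i.e.\ by the \emph{first} disjunct, which you invoke only when the anchor $q$ literally lies on \ldr{p}. (The first disjunct must be read as the \emph{label} of $q$ meeting \ldr{p}; that is what the paper proves.)

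The missing idea is precisely the paper's argument for $L$: at every $y$-coordinate the piece $L$ has width at most one, and every label is a unit square anchored at its bottom-left corner, so the label of any $q\in L$ extends one unit to the right and is forced to cross \ldr{p}. This width argument never appears in your proposal, and it cannot be replaced by a separation statement, since the piece of $E(\ell)$ adjacent to where the labels of $S(\ell,u)$ reach in necessarily contains anchors whose labels can meet those labels --- that is the very reason $E(\ell)$ is nonempty. Your $R$-side reasoning (anchors of $S(\ell,u)$ on the far side of the line through \ldr{p}, convexity of \lbl{q}, and the check that the crossing lies on the half-line and not its extension) is fine in spirit and matches the paper's ``$R$ is separated from $S(\ell,u)$ by \ldr{p}'', though you should phrase the separation in terms of the anchors (or the relevant intersections) rather than asserting that entire labels of $S(\ell,u)$ lie weakly on the $L$-side.
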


\begin{proof}
  It is easy to see that any leader \ldr{p} intersects the line segment
  $\overline{\ell{}z}$, with $z=(\ell_x,\ell_y+1)$ if $\theta \in
  (5\pi/4,3\pi/2)$, and $z=(\ell_x+1,\ell_y)$ if $\theta \in
  (3\pi/2,2\pi)$. See Fig.~\ref{fig:leaders_in_bottom}. In both cases \ldr{p}
  subdivides $E(\ell)$ into a left region $L$ and a right region $R$. At any
  height ($y$-coordinate), this left region $L$ has width at most one. Hence, if there
  is a point $q \in L$ labeled internally, then its label intersects
  \ldr{p}. Any point $q \in R$ is separated from $S(\ell,u)$ by \ldr{p}. So, if
  there is a label $a \in S(\ell,u)$ that intersects \lbl{q}, then it also
  intersects
  \ldr{p}. 
\end{proof}

\begin{observation}
  \label{obs:closest_leader}
  Let $p \in \P \cap E''(\ell)$ be the point closest to the slab
  $\overline{S(\ell,u)}$, and let $q$ be any point in $\P \cap E''(\ell)$. If
  there is a point $a \in S(\ell,u)$ whose label \lbl{a} intersects \ldr{q},
  then \lbl{a} also intersects~\ldr{p}.
\end{observation}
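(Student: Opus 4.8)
The plan is to reuse the separating-leader idea from the proof of Lemma~\ref{lem:leaders_in_E''}. All leaders share the slope $\theta$, so $\ldr{p}$ and $\ldr{q}$ are parallel. It is convenient to argue in the rotated coordinate system in which $\theta$ points along the negative $\tilde x$-axis, so that every leader is a horizontal half-line and the slab $\overline{S(\ell,u)} = \tilde{T_\ell} \cap \tilde{B_u}$ is a horizontal strip lying above the region $E''(\ell) \subseteq \tilde{B_\ell}$. Since $p$ is the point of $\P \cap E''(\ell)$ closest to this strip, its $\tilde y$-coordinate is maximal among those points; hence the line $h_q$ carrying $\ldr{q}$ lies strictly below the line $h_p$ carrying $\ldr{p}$ (using that, by assumption, no two points of $\P$ lie on a common line of slope $\theta$), while $h_p$ in turn lies strictly below the strip.

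Next I would suppose $a \in S(\ell,u)$ with $\lbl{a} \cap \ldr{q} \neq \emptyset$, and pick a point $x \in \lbl{a} \cap \ldr{q}$. The anchor $a$ lies in the slab, hence strictly above $h_p$, whereas $x$ lies on $h_q$, hence strictly below $h_p$. As $\lbl{a}$ is convex and contains both $a$ and $x$, it contains the segment $\overline{ax}$, which crosses $h_p$ in a point $y$; thus $y \in \lbl{a}$ lies on the line carrying $\ldr{p}$. It then remains to check that $y$ lies on the half-line $\ldr{p}$ itself and not on its backward extension, i.e.\ that the $\tilde x$-coordinate of $y$ does not exceed that of $p$. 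This follows from the relative placement of $p$, $\ell$, and $S(\ell,u)$: recall that $E''(\ell)$ is non-empty only for $\theta \in (5\pi/4, 3\pi/2) \cup (3\pi/2, 2\pi)$, and in both ranges the geometry recorded in Lemma~\ref{lem:sizes_E} and depicted in Fig.~\ref{fig:leaders_in_bottom} pins down this inequality. Finally, $y \in \lbl{a} \subseteq \M$, and since $\M$ is convex the subsegment of $\ldr{p}$ between $p$ and $y$ stays inside $\M$; hence $y$ lies on the inner leader $\ldr{p}$, and $\lbl{a}$ intersects $\ldr{p}$, as claimed.

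The routine part is the convexity argument. The only point that needs care is the final coordinate check that the crossing $y$ lands on the forward half-line $\ldr{p}$ rather than behind $p$; I expect this to require no new idea, only the same positional bookkeeping between $E''(\ell)$, $\ell$, and the subproblem $S(\ell,u)$ that the case analysis of Lemma~\ref{lem:sizes_E} already performs — but one has to distinguish the two relevant slope ranges.
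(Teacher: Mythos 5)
Your opening moves are fine and are surely the intended picture (note the paper states this as an Observation with no written proof): all leaders are parallel, the line $h_p$ through $\ldr{p}$ separates the slab from $h_q$, and convexity of $\lbl{a}$ yields a point $y\in\lbl{a}$ on $h_p$. But the step you postpone --- that $\tilde x(y)\le\tilde x(p)$, so that $y$ lies on the half-line $\ldr{p}$ and not on its extension beyond $p$ --- is the entire content of the observation, and it does not follow from the ingredients you cite (the region bounds of Lemma~\ref{lem:sizes_E}, Fig.~\ref{fig:leaders_in_bottom}, and $p$ being closest to the slab). For a point that merely lies in $E'(\ell)$ and is closest to the slab, the implication is false. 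Take $\theta=7\pi/4$, $\ell=(0,0)$, $u=(-3,9)$, so the strip $E'(\ell)$ is $\{-1<y-x<0,\ y>0\}$ and $S(\ell,u)=\{0<y-x<12,\ x+y<0\}$. Let $q=(5,4.5)$ (which does lie in $E''(\ell)$), let $a=(-0.16,0.12)\in S(\ell,u)$, and consider the strip point $p'=(0.55,0.1)$, which is closer to the slab than $q$ since $p'_y-p'_x=-0.45>-0.5=q_y-q_x$. Then $\lbl{a}=[-0.16,0.84]\times[0.12,1.12]$ meets $\ldr{q}$ (e.g.\ at $(0.62,0.12)$), but the line supporting $\ldr{p'}$ enters $\lbl{a}$ only for $x\ge 0.57>p'_x$, i.e.\ behind $p'$, so $\lbl{a}$ misses $\ldr{p'}$: exactly the failure your deferred check must exclude.

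The observation survives only because such a $p'$ is not in $E''(\ell)$: its label $[0.55,1.55]\times[0.1,1.1]$ can overlap a label anchored in $S(\ell,u)$ (for instance at $(-0.4,0.3)$), so $p'\in E(\ell)$. A complete proof therefore has to use the defining property $p\notin E(\ell)$, i.e.\ an \emph{inner} bound on $E(\ell)$: the initial stretch of $E'(\ell)$ next to the segment $\overline{\ell z}$ consists of points whose labels can still reach labels of $S(\ell,u)$, so every point of $E''(\ell)$ lies far enough along the strip (for $\theta=7\pi/4$, $p_x+p_y\ge 1$, equivalently $\tilde x(p)\ge\tilde x(\ell)+1/\sqrt2$), and only this pushes the crossing point in front of $p$, since the part of $\lbl{a}$ below the slab stays within bounded rotated-$x$ distance of $\ell$. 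Lemma~\ref{lem:sizes_E} gives only outer bounds on $E(\ell)$ and cannot supply this, and your sketch never invokes $p\notin E(\ell)$ at all; the analogous argument is also needed for $\theta\in(5\pi/4,3\pi/2)$. So the ``coordinate check'' you label routine is a genuine gap, not bookkeeping.
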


Observation~\ref{obs:closest_leader} give us that there is only one relevant
point in $E''(\ell)$, namely the point $p$ closest to
$\overline{S(\ell,u)}$. Furthermore, from Lemma~\ref{lem:leaders_in_E''} it
follows that if $p$ exists, then there are no relevant points in $E(\ell)$
labeled internally. Hence, $p$ by itself determines a configuration. We can
then define the universe $\U_{\ell{}E}$ of possible configurations of
$\tilde{B_\ell}$ as follows:
\begin{align*}
  \U_{\ell{}E} &= \{ (\I,\emptyset)    \mid \I \subseteq E(\ell) \text{ and all labels
    in } \Lambda(\I) \text{ are pairwise disjoint} \}~\cup \\
         &~\quad\{ (\emptyset,\{p\}) \mid p \in E''(\ell) \}
\end{align*}

Let $e'(\theta) = 1$ if there is a point in $E''(\ell)$ labeled externally,
and $e'(\theta) = 0$ otherwise (this includes the case in which $E'(\ell) =
\emptyset$). Using that the labels of points in $E(\ell)$ do not contain any
other points, together with Lemma~\ref{lem:delta} we then obtain:

\begin{lemma}
  \label{lem:bottom_configurations}
  The number of configurations of $\tilde{B_\ell}$ is at most
  $O(|\U_{\ell{}E}|)= O(\delta^{e(\theta)} + n^{e'(\theta)})$.
\end{lemma}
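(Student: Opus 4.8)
The plan is to prove the lemma in two stages: first, show that the set $\U_{\ell{}E}$ written just above the statement already enumerates every configuration of $\tilde{B_\ell}$, so the number of configurations is at most $|\U_{\ell{}E}|$; and second, bound $|\U_{\ell{}E}|$ by counting its two defining parts separately.

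For the first stage, recall that a configuration of $\tilde{B_\ell}$ records only the information about a labeling of $\tilde{B_\ell}$ that can influence the labeling of $S(\ell,u)$. I would case-split on whether some point of $E''(\ell)$ is labeled externally. If no point of $E''(\ell)$ is labeled externally, then the points of $\tilde{B_\ell}$ capable of influencing $S(\ell,u)$ are exactly the internally labeled points of $E(\ell)$: points of $\P_X$ and externally labeled points of $E(\ell)\setminus Q$ add nothing new (by the remarks preceding the lemma), points of $E''(\ell)$ are all internal here so carry no leader and, lying outside $E(\ell)$, have labels disjoint from $\Lambda(S(\ell,u))$. Hence the configuration is determined by the internally labeled set $\I\subseteq E(\ell)$, whose labels are pairwise disjoint — an element of the first set of $\U_{\ell{}E}$. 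If instead some point of $E''(\ell)$ is labeled externally, then by Observation~\ref{obs:closest_leader} only the point $p$ closest to the slab can matter, and by Lemma~\ref{lem:leaders_in_E''} the internally labeled points of $E(\ell)$ contribute nothing new once $\ldr{p}$ is present; so the configuration is determined by $p$ alone — an element of the second set of $\U_{\ell{}E}$. In both cases each configuration coincides with an element of $\U_{\ell{}E}$, so the number of configurations is at most $|\U_{\ell{}E}|$.

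For the second stage, I would bound the two sets in turn. By Lemma~\ref{lem:sizes_E}, $E(\ell)$ lies in an axis-parallel rectangle of dimensions $1\times e(\theta)$ or $e(\theta)\times 1$ with $e(\theta)\le 3$ a constant, so $E(\ell)$ is covered by $O(1)$ unit squares. Since $\P_X$ has been removed, every point of $E(\ell)$ has a label containing no other point of \P, so Lemma~\ref{lem:delta} applies to each covering unit square and gives $|E(\ell)\cap\P| = O(\delta)$. By Corollary~\ref{cor:at_most_KE_labeled_internally} any admissible $\I$ satisfies $|\I|\le e(\theta)$, so the number of such subsets is $\sum_{i=0}^{e(\theta)}\binom{O(\delta)}{i} = O(\delta^{e(\theta)})$. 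For the second set, trivially $|E''(\ell)|\le n$, which contributes at most $n$ configurations exactly when a point of $E''(\ell)$ may be labeled externally, i.e.\ $e'(\theta)=1$, and none otherwise ($e'(\theta)=0$, giving $n^{0}=1$); thus its size is $O(n^{e'(\theta)})$. Adding the two bounds gives $|\U_{\ell{}E}| = O(\delta^{e(\theta)} + n^{e'(\theta)})$, and combined with the first stage this proves the lemma.

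I expect the main obstacle to be the first stage: rigorously arguing that no configuration escapes $\U_{\ell{}E}$, which means turning the informal ``do not define new configurations'' remarks into a clean case analysis and, in the externally labeled case, verifying that every way the internal labels of $E(\ell)$ could interfere with $S(\ell,u)$ is already subsumed by the leader of the closest point of $E''(\ell)$ via Lemma~\ref{lem:leaders_in_E''}. The counting in the second stage is routine once $e(\theta)$ is treated as a constant and Lemma~\ref{lem:delta} is invoked on the $O(1)$ unit squares covering $E(\ell)$.
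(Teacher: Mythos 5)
Your proposal is correct and matches the paper's (very terse) argument: the paper also identifies each configuration with an element of $\U_{\ell{}E}$ via Observation~\ref{obs:closest_leader} and Lemma~\ref{lem:leaders_in_E''}, and then bounds $|\U_{\ell{}E}|$ by combining Lemma~\ref{lem:delta} with the fact that labels of points in $E(\ell)$ contain no other points and that at most $e(\theta)$ of them can be internal (Corollary~\ref{cor:at_most_KE_labeled_internally}). You have simply spelled out the counting and the ``no configuration escapes $\U_{\ell{}E}$'' step that the paper leaves implicit.
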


\paragraph{Bounding the number of configurations of $\tilde{T_u}$} Analogously
to the bottom influence region $E(\ell)$ in $\tilde{B_\ell}$ we define a
\emph{top influence region} $F(u)$ containing the points from $\tilde{T_u}$
whose label can intersect a label of $S(\ell,u)$, and a region $F'(u)$
containing the points whose leader can intersect a label of the points in
$S(\ell,u)$. We observe that $F(u)$ and $F'(u)$ are symmetric to $E(\ell)$ and
$E'(\ell)$ by mirroring in a line with slope one.\frank{something slightly more
specific would be nice} We thus get
similar results for $F$ and $F'$ as those stated in Lemmas, Corollaries, and
Observations~\ref{lem:sizes_E}-\ref{obs:closest_leader}. So, similarly we
define the universe of configurations $\U_{uF}$ of labelings of
$\tilde{T_\ell}$. We can then summarize our results in the following lemma:
\begin{lemma}
  \label{lem:top_configurations}
  The number of configurations of $\tilde{T_u}$ is at most
  $O(|\U_{uF}|)= O(\delta^{f(\theta)} + n^{f'(\theta)})$, where
  \begin{tabular}{>{\centering\arraybackslash}m{0.33\textwidth}
                  >{\centering\arraybackslash}m{0.34\textwidth}
                  >{\centering\arraybackslash}m{0.3\textwidth}}
  $\displaystyle
  f(\theta) \leq
  \begin{cases}
    0 & \text{if } \theta = 0 \\
    1 & \text{if } \theta \in (0,\pi/4) \\
    0 & \text{if } \theta \in [\pi/4,3\pi/4] \\
    1 & \text{if } \theta \in (3\pi/4,\pi) \\
    0 & \text{if } \theta = \pi \\
  \end{cases}
  $
  &
  $\displaystyle
  f(\theta) \leq
  \begin{cases}
    1 & \text{if } \theta \in (\pi,5\pi/4] \\
    2 & \text{if } \theta \in (5\pi/4,3\pi/2) \\
    1 & \text{if } \theta  = 3\pi/2 \\
    2 & \text{if } \theta \in (3\pi/2,7\pi/4] \\
    3 & \text{if } \theta \in (7\pi/4,2\pi) \\
  \end{cases}
  $
  &
  \includegraphics[page=2]{orientations}
  \end{tabular}
  \noindent and
  \[ f'(\theta) \leq
  \begin{cases}
    1 & \text{if } \theta \in (0,\pi/4) \cup (3\pi/4,2\pi) \\
    0 & \text{otherwise.}
  \end{cases}
\]
\end{lemma}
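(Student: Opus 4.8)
The plan is to obtain this lemma by a symmetry argument, mirroring the analysis already carried out for the bottom influence region $E(\ell)$. The first step is to make the reflection precise. Reflecting the plane in a line of slope one, say the line $y=x$, is an isometry that preserves our labeling model: it maps the axis-parallel unit label anchored by its lower-left corner at $(p_x,p_y)$ to the axis-parallel unit label anchored by its lower-left corner at $(p_y,p_x)$, it maps a leader of slope $\theta$ to a leader of slope $3\pi/2-\theta \pmod{2\pi}$, and it preserves all label--label, label--leader, and leader--leader incidences. Since it reverses orientation, it turns the clockwise leader convention into a counterclockwise one, which interchanges the roles of the lower boundary point $\ell$ and the upper boundary point $u$ of a subproblem. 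Hence the reflection carries the triple consisting of a subproblem $S(\ell,u)$ for slope $\theta$, its top influence region $F(u)$, and its region $F'(u)$, to a triple consisting of a subproblem for slope $3\pi/2-\theta$, a bottom influence region $E(\cdot)$, and a region $E'(\cdot)$.

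Once this is established, Lemma~\ref{lem:sizes_E}, Corollary~\ref{cor:at_most_KE_labeled_internally}, Lemma~\ref{lem:leaders_in_E''}, and Observation~\ref{obs:closest_leader} transfer to $F(u)$, $F'(u)$, and $F''(u):=F'(u)\setminus F(u)$ with $\theta$ replaced by $3\pi/2-\theta$ throughout: $F(u)$ has size at most $1\times f(\theta)$ or $f(\theta)\times 1$ with $f(\theta)=e(3\pi/2-\theta)$; at most $f(\theta)$ points of $F(u)$ can carry pairwise disjoint internal labels; and among the points of $F''(u)$ only the one closest to the slab is relevant, and if it is labeled externally then no point of $F(u)$ is relevantly labeled internally. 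The tabulated values of $f(\theta)$ are then read off by substituting $3\pi/2-\theta$ into the table of Lemma~\ref{lem:sizes_E} and simplifying over the twelve slope intervals, and the range on which $F'(u)\ne\emptyset$ (hence the table for $f'(\theta)$) is obtained by mirroring the corresponding characterization of when $E'(\ell)\ne\emptyset$.

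With these mirrored facts in hand the argument closes exactly as it does for $\tilde B_\ell$ before Lemma~\ref{lem:bottom_configurations}: set $\U_{uF}=\{(\I,\emptyset)\mid \I\subseteq F(u)\text{ with pairwise disjoint labels}\}\cup\{(\emptyset,\{p\})\mid p\in F''(u)\}$; since the labels of the points in $F(u)$ contain no other input point, Lemma~\ref{lem:delta} bounds the number of subsets $\I\subseteq F(u)$ with pairwise disjoint labels by $O(\delta^{f(\theta)})$; the mirrored Observation~\ref{obs:closest_leader} bounds the number of choices for the single relevant externally labeled point of $F''(u)$ by $O(n^{f'(\theta)})$; and adding the two counts gives $O(|\U_{uF}|)=O(\delta^{f(\theta)}+n^{f'(\theta)})$. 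The step I expect to be the main obstacle is the first one: checking that the slope-one reflection really does match the rotated, ``opening-to-the-left'' slab picture for $\theta$ with the one for $3\pi/2-\theta$ --- in particular that the label-anchoring corner and the clockwise leader orientation transform consistently and that the $\ell$/$u$ roles swap as claimed --- and then carrying out the finite case analysis that converts the $e$-table and the $E'$-characterization into the tabulated $f$ and $f'$. Everything after that is a routine transcription of the $E$-side argument.
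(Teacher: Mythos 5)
Your proposal matches the paper's own argument: the paper proves this lemma precisely by observing that $F(u)$ and $F'(u)$ are mirror images of $E(\ell)$ and $E'(\ell)$ under reflection in a line of slope one (your explicit map $\theta \mapsto 3\pi/2-\theta$ with the roles of $\ell$ and $u$ interchanged), transferring Lemma~\ref{lem:sizes_E} through Observation~\ref{obs:closest_leader}, and then counting configurations exactly as in Lemma~\ref{lem:bottom_configurations}, so your write-up is just a more explicit version of the same symmetry argument (and your table check $f(\theta)=e(3\pi/2-\theta)$ is consistent with the stated values). The only difference is harmless: mirroring the characterization of when $E'(\ell)\neq\emptyset$ gives $f'(\theta)\le 1$ only on $(0,\pi/4)\cup(3\pi/2,2\pi)$, which is contained in, and hence implies, the weaker range $(0,\pi/4)\cup(3\pi/4,2\pi)$ stated in the lemma.
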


\subsection{Computing an Optimal Labeling}
\label{sub:Computing_an_Optimal_Labeling}

Let \lab{\ell,u,\C_{\ell{}E},\C_{uF}} denote the maximum number of points in
$S(\ell,u)$ that can be labeled internally, given configurations
$\C_{\ell{}E}=(\I_\ell,\E_\ell)$ and $\C_{uF}=(\I_u,\E_u)$. That is, the
maximum number of points in $S(\ell,u)$ that can be labeled internally assuming
that (i) the points in $\I_\ell \subseteq E(\ell)$ and $\I_u \subseteq F(u)$
are labeled internally, and (ii) the points in $\E_\ell$, $\E_u$, and the
remaining points in $E(\ell)$ and $F(u)$ are labeled externally. An argument
similar to that of Lemma~\ref{lem:combining} then gives us the following
recurrence for \lab{\ell,u,\C_{\ell{}E},\C_{uF}}:
\[
\lab{\ell,u,\C_{\ell{}E},\C_{uF}} = \max~\left\{ \Psi(S(\ell,u)),
  \parbox[c][2cm]{3cm}{
    \begin{equation*}
                                              \max_{\substack{
                                                  p      \in S(\ell,u),\\
                                                  \C_{pE} \in \varrho(  p,\ell,u,\C_{\ell{}E}), \\
                                                  \C_{pF} \in \varsigma(p,\ell,u,\C_{uF}) \\
                                               }}
    \end{equation*}
  }
  \parbox[c][2cm][c]{4cm}{
    \begin{equation*}
      \arraycolsep=1.4pt
    \begin{Bmatrix}
       &\Psi(\tilde{R}_p \cap S(\ell,u))\\
      +&\lab{\ell, p, \C_{\ell{}E}, \C_{pF}}\\
      +&\lab{p,    u, \C_{pE},     \C_{uF}}
    \end{Bmatrix}
    \end{equation*}
  }\right\}
\]
\noindent where $\varrho(p,\ell,u,\C_{\ell{}E})$ and $\varsigma(p,\ell,u,\C_{uF})$ denote the universes
$\U_{pE}$ and $\U_{pF}$ restricted to the sets \emph{compatible} with the
labeling so far, respectively. More formally, we have
\[
  \varrho(p,\ell,u,\C_{\ell{}E}) = \left\{ (\I,\E)
    ~\middle|~~ \parbox[c][3cm][c]{4cm}{
      \begin{align*}
        & (\I,\E) \in \U_{pE},\\
        & \I \supseteq E(p)\cap ((\overline{S(\ell,u)} \cap \tilde{R_p}) \cup \I_\ell) \text{, and }\\
        & \E = \begin{cases}
                    \{\ell\}  & \text{if } \ell \in E''(p) \\
                    \E_\ell    & \text{if } \E_\ell \subset E''(p)\\
                    \emptyset & \text{otherwise.}
                  \end{cases}
      \end{align*}
    }\right\}
\]
%
%
\noindent The function $\varsigma$ is defined analogously.
\maarten [wonders]{Should we add the formula for $\varsigma$ explicitly too? It's just a bit of copy-pasting, but it saves the reader from having to figure out what ``analogously'' means, which I think is not entirely trivial.} \maarten [leaves a dramatic pause, then adds]{Also, it will make our paper look even more impressive.}
Similar to the
previous section we define $\Psi(P)$ as
\[
\Psi(P) =
\begin{cases}
  |P| & \parbox{0.8\textwidth}{if all labels in $\Lambda(P \cup \I_\ell \cup
    \I_u \cup (\overline{S(\ell,u)}\setminus S(\ell,u)))$ are pairwise
    disjoint, and their intersection with the leaders in $\Gamma(\P_X \cup
    \{\ell,u\} \cup \E_\ell \cup \E_u \cup (E(\ell) \setminus \I_\ell) \cup
    (F(u) \setminus \I_u))$ is empty,} \\
  -\infty & \text{otherwise}.
\end{cases}
\]

\paragraph{Computing \lab{\ell,u,\C_{\ell{}E},\C_{uF}}} We again use dynamic
programming. The size of our table is now $O(n^2 |\U_{\ell{}E}| |\U_{uF}|)$. To
compute the value of an entry \lab{\ell,u,\C_{\ell{}E},\C_{uF}}, we maximize
over $O(n |\U_{pE}| |\U_{pF}|)$ other entries. For each such entry, we need to
compute the value of $\Psi(P)$, for some set of points $P$. We can do this in
$O(n\log n)$ time using the algorithm from Section~\ref{sec:left_leaders}
. In
total this yields an $O(n^4 |\U_{\ell{}E}| |\U_{uF}| |\U_{pE}| |\U_{pF}| \log
n)$ time algorithm. Next, we describe how to improve this to $O(n^3 \log n +
n^3 |\U_{\ell{}E}| |\U_{uF}| |\U_{pE}| |\U_{pF}|)$ time by precomputing $\Psi$.

Fix two points $\ell$ and $u$, and let $\Psi'(p,\C_{\ell{}E},\C_{uF}) =
\Psi(\tilde{R}_p \cap S(\ell,u))$, given configurations $\C_{\ell{}E}$ and
$\C_{uF}$. We use a similar approach as in Section~\ref{sec:left_leaders}. We
first compute $\Psi'$ for all points $p$, assuming that no other points above or
below $\overline{S(\ell,u)}$ interfere with $S(\ell,u)$. That is, we compute
all values $\Psi'(p,(\emptyset,\emptyset),(\emptyset,\emptyset))$. This takes $O(n
\log n)$ time using the same algorithm as before. For each of
the remaining pairs of configurations $(C_{\ell{}E},C_{uF})$, we find the
``rightmost'' point $p$ in $\overline{S(\ell,u)}$ such that the label of $p$
conflicts with $C_{\ell{}E}$ or $C_{uF}$. It then follows that
$\Psi'(p',C_{\ell{}E},C_{uF}) = -\infty$ for all $p' \in \tilde{L_p} \cup \{p\}$.

Next, we describe how we can find the ``rightmost'' point that conflicts with
$\C_{\ell{}E}$ in $O(\log n)$ time, after $O(n \log n)$ time preprocessing. We
find the ``rightmost'' point that conflicts with $\C_{uF}$ analogously. It then
follows that we can compute $\Psi'(p,\C_{\ell{}E},\C_{uF})$ for all configurations
and all points $p \in \overline{S(\ell,u)}$ in $O(n \log n +
|\U_{\ell{}E}||\U_{uF}| + (|\U_{\ell{}E}| + |\U_{uF}|)\log n)$ time in
total. \frank{Not sure if we should keep the quotes around rightmost all the
  time.} \martin{I would say in the beginning of the section that the terms rightmost etc. are with respect to $\theta$ and then we can skip the quotes here}
\maarten [helpfully suggests] {Or we can define a different term to mean ``rightmost'', to differentiate from the normal rightmost. Like ``righteous'', for example.}

The ``rightmost'' point that conflicts with a configuration $\C_{\ell{}E} =
(\I,\E)$ conflicts with the set of internally labeled points \I, or the set of
externally labeled points in $\E \cup E(\ell) \setminus \I$. For both these
sets we find the ``rightmost'' point conflicting with it, and return the
``rightmost'' point of those two.

To find the ``rightmost'' point $q$ conflicting with \I, we use the same
procedure as in the previous section. We build a range tree on the corner
points of $\Lambda(\overline{S(\ell,u)})$, and use a priority range query to
find $q_r \in \overline{S(\ell,u)}$ whose label intersects a query label
\lbl{r}, $r \in \I$. We can thus find $q$ in $O(|\I|\log n) = O(\log n)$ time.

If the labels of $S(\ell,u)$ are not contained in $\overline{S(\ell,u)}$ then
we also need to find the ``rightmost'' point whose label intersects a leader in
$Y = \Gamma(\E \cup E(\ell) \setminus \I)$. To do this we need two more data
structures. We preprocess the edges of $Z = \Lambda(\overline{S(\ell,u)})$ to
allow for ray shooting queries with rays of orientation $\theta$. Since all
edges of $Z$ are either horizontal or vertical, and all query rays have the
same orientation, this can be done with a shear transformation and two standard
one-dimensional interval or segment trees $\T'$ (one for the horizontal edges
of $Z$ and one for the vertical edges of $Z$). We can build these trees in $O(n
\log n)$ time~\cite{bkos-cgaa-00}. This allows us to find the ``rightmost''
point $q_r$ whose label intersects a given leader \ldr{r} in $O(\log n)$
time. To find the ``rightmost'' point whose label intersects a leader among
\emph{all} leaders in $Y$ we use the following approach. We query $\T'$ with
the leader $\ldr{r} \in Y$ closest to $\overline{S(\ell,u)}$, and find $q_r$
(if it exists). We then find the first leader $\ldr{s}$ that is hit by a
horizontal rightward ray starting in $r$ (see
Fig.~\ref{fig:horizontal_ray_shooting}), and recursively process \ldr{s}. For
any subsequent pair of such leaders $(\ldr{r},\ldr{s})$ we have that point $s$
lies outside of the label \lbl{r}. Since all but one of these points lie in
$E(\ell)$, there are only a constant number of such pairs. Hence, we also need
only a constant number of queries in $\T'$, each of which takes $O(\log n)$
time.

\begin{figure}[tb]
  \centering
  \includegraphics{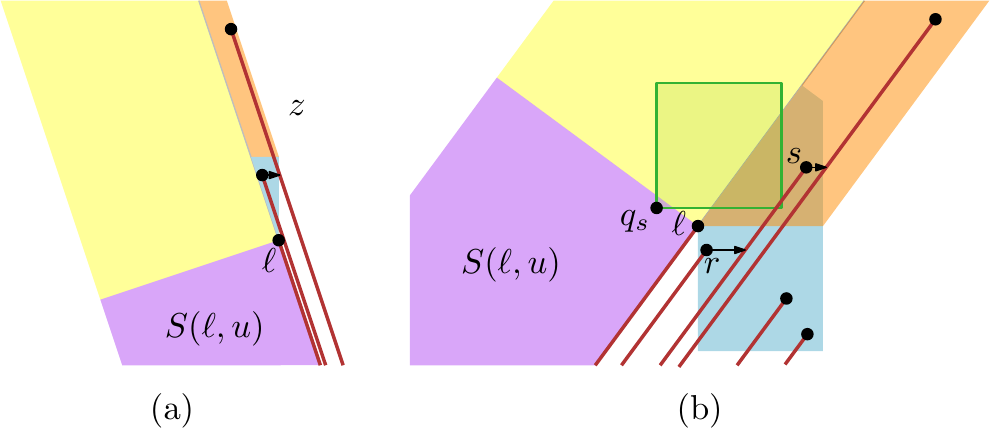}
  \caption{We can find the leaders that can intersect a label from a point in
    $S(\ell,u)$ by a series of horizontal ray shooting queries.}
  \label{fig:horizontal_ray_shooting}
\end{figure}

The only question remaining is how to find the next leader $\ldr{s}$ given
point $r$. To this end we maintain a second data structure. We use a shear
transformation such that the leaders are all vertical. We then build a dynamic
data structure \D for horizontal ray shooting queries among vertical
half-lines. Such a data structure can be build in $O(n \log n)$ time, and
allows for $O(\log n)$ updates and queries. See for
example\cite{giora2009verticalrayshooting}, although much simpler solutions are
possible.\frank{I couldn't find a citation to the easier approach that uses
  that our segments are half-lines, so I just cited the full retroactive
  dictionary. I'm not sure about the exact phrasing to use. Somehow I think it
  is good to stress that it can be done easier than using this full thing, but
  I don't really want to explain the simpler version from scratch, at least not
  for in the conference version.} We can update \D for the next configuration
$\C'_{\ell{}E}$ in $O(\log n)$ time, since only a constant number of points
change from being labeled internally to labeled externally and vice versa.

Hence, after $O(n \log n)$ time preprocessing, we can compute the ``rightmost''
point conflicting with each configuration $\C_{\ell{}E}$ in $O(\log n)$
time. The total time required to compute $\Psi'(p,\C_{\ell{}E},\C_{uF})$, for all
points $p \in S(\ell,u)$, is thus $O(|\U_{\ell{}E}| |\U_{uF}| +
(n + |\U_{\ell{}E}| + |\U_{uF}|)\log n)$.

\newcommand{\myexp}[2]{{#1}_E(\theta) #2 {#1}_F(\theta)}

\newcommand{\minp}{\myexp{s}{\downarrow}}
\newcommand{\maxp}{\myexp{s}{\uparrow}}

\newcommand{\minpp}{\myexp{s'}{\downarrow}}
\newcommand{\maxpp}{\myexp{s'}{\uparrow}}

\newcommand{\sump}{\myexp{s}{+}}
\newcommand{\sumpp}{\myexp{s'}{+}}

After precomputing all values of $f$, the dynamic programming takes
$O(n^3|\U_{\ell{}E}| |\U_{uF}||\U_{pE}| |\U_{pF}|)$ time. We thus obtain a
total running time of $O(n^3 \log n + n^3 |\U_{\ell{}E}| |\U_{uF}| |\U_{pE}|
|\U_{pF}|))$. Using that $|\U_{\ell{}E}|$ and $|\U_{pE}|$ are both at most
$O(n^{e'(\theta)} + \delta^{e(\theta)})$
(Lemma~\ref{lem:bottom_configurations}), and that $|\U_{uF}|$ and $|\U_{pF}|$
are at most $O(n^{f'(\theta)} + \delta^{f(\theta)})$
(Lemma~\ref{lem:top_configurations}) we can rewrite this to $O(n^3(\log n +
\iota(n,\delta,\theta)))$, where $\iota(n,\delta,\theta)$ is a term that models
how much the subproblems can influence each other. We have
%
\begin{alignat*}{19}
 \iota(n,\delta,\theta) =
                        & && n^{2e'(\theta)+2f'(\theta)}&&
                        &&+&& n^{2e'(\theta)+f'(\theta)} &&\delta^{f(\theta)}
                        &&+&& n^{e'(\theta) + 2f'(\theta)}&&\delta^{e(\theta)}\\
                         &+&& n^{e'(\theta)+f'(\theta)}  &&\delta^{e(\theta)+f(\theta)}
                        &&+&& n^{2e'(\theta)}           &&\delta^{2f(\theta)}
                        &&+&& n^{2f'(\theta)}           &&\delta^{2e(\theta)}\\
                         &+&& n^{e'(\theta)}            && \delta^{e(\theta)+2f(\theta)}
                        &&+&& n^{f'(\theta)}            &&\delta^{2e(\theta)+f(\theta)}
                        &&+&&                         &&\delta^{2e(\theta)+2f(\theta)}.
\end{alignat*}
For $\delta = O(n)$, this gives us a worst case running time varying between
$O(n^3\log n)$ and $O(n^{15})$. We conclude:

\begin{proposition}
  Given a set \P of $n$ points and an angle $\theta$, we can compute a labeling
  of \P that maximizes the number of internal labels in $O(n^3(\log n +
  \iota(n,\delta,\theta)))$ time and $O(n^2\sqrt{\iota(n,\delta,\theta)})$
  space, where $\delta = \min \{n, 1/d_{\min}\}$ for the minimum distance
  $d_{\min}$ in \P, and $\iota(n,\delta,\theta)$ models how much subproblems
  can influence each other.
\end{proposition}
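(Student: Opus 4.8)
The plan is to assemble the machinery of Sections~\ref{sub:Bounding_the_number_of_Configurations} and~\ref{sub:Computing_an_Optimal_Labeling} and to observe that correctness is immediate from the recurrence while the complexity bounds are a matter of accounting. First I would establish correctness of the dynamic program by the same argument as in Lemma~\ref{lem:combining}: an optimal labeling of $S(\ell,u)$ respecting the prescribed configurations $\C_{\ell E}$ and $\C_{uF}$ either labels all of $S(\ell,u)$ internally, or has a ``rightmost'' externally labeled point $p$ that splits the instance into $\tilde R_p \cap S(\ell,u)$ (entirely internal), a lower subproblem on $S(\ell,p)$, and an upper subproblem on $S(p,u)$. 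The only genuinely new point compared to $\theta = 0$ is that the outside interference on a subproblem is no longer a single point but an entire configuration; Lemmas~\ref{lem:sizes_E}--\ref{lem:leaders_in_E''} and Observation~\ref{obs:closest_leader} show that the configurations of $\tilde B_p$ and $\tilde T_p$ that can actually be induced are precisely those collected in the restricted universes $\varrho(p,\ell,u,\C_{\ell E})$ and $\varsigma(p,\ell,u,\C_{uF})$ over which the recurrence maximizes. Hence $\lab{p_{-\infty},p_\infty,(\emptyset,\emptyset),(\emptyset,\emptyset)}$ equals the optimum, and storing a back-pointer per table entry recovers the actual internal/external assignment.

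Next I would bound the running time. The table has $O(n^2|\U_{\ell E}||\U_{uF}|)$ entries and each entry is computed by maximizing over $O(n|\U_{pE}||\U_{pF}|)$ terms, so once every needed $\Psi'$ value is available in $O(1)$ time the dynamic program runs in $O(n^3|\U_{\ell E}||\U_{uF}||\U_{pE}||\U_{pF}|)$ time. To make the $\Psi'$ values available, for each of the $O(n^2)$ pairs $(\ell,u)$ I would carry out the precomputation described above: the right-to-left sweep giving $\Psi'(p,(\emptyset,\emptyset),(\emptyset,\emptyset))$ in $O(n\log n)$ time, and then, using the range tree on label corners, the two $\theta$-oriented ray-shooting interval/segment trees $\T'$, and the dynamic horizontal ray-shooting structure $\D$, the ``rightmost'' conflicting point for every configuration in $O\bigl(n\log n + |\U_{\ell E}||\U_{uF}| + (|\U_{\ell E}|+|\U_{uF}|)\log n\bigr)$ time. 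Summed over all pairs this contributes $O(n^3\log n)$ plus terms dominated by the dynamic-programming cost, giving a total of $O(n^3\log n + n^3|\U_{\ell E}||\U_{uF}||\U_{pE}||\U_{pF}|)$.

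Finally I would substitute the configuration bounds. By Lemma~\ref{lem:bottom_configurations} we have $|\U_{\ell E}|,|\U_{pE}| = O(n^{e'(\theta)}+\delta^{e(\theta)})$, and by Lemma~\ref{lem:top_configurations} we have $|\U_{uF}|,|\U_{pF}| = O(n^{f'(\theta)}+\delta^{f(\theta)})$; expanding the product $(n^{e'(\theta)}+\delta^{e(\theta)})^2(n^{f'(\theta)}+\delta^{f(\theta)})^2$ term by term reproduces exactly the nine-term formula for $\iota(n,\delta,\theta)$, so the time is $O(n^3(\log n + \iota(n,\delta,\theta)))$. For space, the table dominates: the auxiliary structures are rebuilt for each pair $(\ell,u)$ and use only $O(n\log n)$ space, while $|\U_{\ell E}||\U_{uF}| = \Theta\bigl((n^{e'(\theta)}+\delta^{e(\theta)})(n^{f'(\theta)}+\delta^{f(\theta)})\bigr) = \Theta(\sqrt{\iota(n,\delta,\theta)})$, so the table occupies $O(n^2\sqrt{\iota(n,\delta,\theta)})$ space. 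I expect the main obstacle to be verifying the recurrence for general $\theta$ --- in particular the fourth complication listed at the start of the section, that $E(p)$ and $F(p)$ need not lie to the ``right'' of $p$, so the ``rightmost externally labeled point'' has to be selected among the points the subproblem has not already fixed as internal --- and checking that the compatibility conditions defining $\varrho$ and $\varsigma$ (the inclusion $\I\supseteq\dots$ and the case distinction on $\E$) encode exactly the reachable configurations; once that is in place, the time and space bounds follow by the accounting above.
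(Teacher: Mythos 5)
Your proposal is correct and follows essentially the same route as the paper: the same recurrence justified by an argument analogous to Lemma~\ref{lem:combining}, the same per-pair precomputation of $\Psi'$ via the sweep, range tree, and ray-shooting structures, the same accounting $O(n^3\log n + n^3|\U_{\ell E}||\U_{uF}||\U_{pE}||\U_{pF}|)$, and the same substitution of Lemmas~\ref{lem:bottom_configurations} and~\ref{lem:top_configurations} to obtain the nine-term $\iota(n,\delta,\theta)$ and the $O(n^2\sqrt{\iota(n,\delta,\theta)})$ table space. The only cosmetic difference is that you make the expansion $(n^{e'(\theta)}+\delta^{e(\theta)})^2(n^{f'(\theta)}+\delta^{f(\theta)})^2$ and the back-pointer recovery explicit, which the paper leaves implicit.
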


\subsection{An Improved Bound on the Number of Configurations}
\label{sub:An_Improved_Bound_on_the_Number_of_Configurations}

The analysis above, together with the fact that $e(\theta)$ and $f(\theta)$
are both at most three, gives us a worst case running time of $O(n^{15})$. We
now study the situation a bit more carefully, and show that the number of
interesting configurations is much smaller. More specifically, that we can
replace $e(\theta)$ and $S_F(\theta)$ by a quantities $e^*(\theta)$ and
$f^*(\theta)$ that are both at most one. This significantly improves the
running time of our algorithm.

We start by observing that for some points $q$ in $E(\ell)$, the subproblem
$S(\ell,u)$ does not have a labeling compatible with $q$, irrespective of
whether $q$ is labeled internally or externally. Let $Q(\ell)$ denote the set
of such points. It follows that we can restrict ourselves to labelings, and
thus configurations, that do not contain points from $Q(\ell)$. Let
$\U^*_{\ell{}E} = \{ (\I,\E) \mid (\I,\E) \in \U_{\ell{}E} \land \I \subseteq
Y(\ell) \}$ denote this subset of configurations, where $Y(\ell) = E(\ell)
\setminus Q(\ell)$.

\begin{lemma}
  \label{lem:CE_at_most_two}
  For every configuration $(\I,\E) \in \U^*_{\ell{}E}$, the set \I has size at
  most two.
\end{lemma}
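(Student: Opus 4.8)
The plan is to argue that the set $Y(\ell) = E(\ell) \setminus Q(\ell)$ of "surviving" points in the bottom influence region is so geometrically constrained that at most two of its points can be labeled internally with pairwise-disjoint labels. We already know from Lemma~\ref{lem:sizes_E} that $E(\ell)$ fits inside a box of size $1 \times e(\theta)$ (or $e(\theta) \times 1$) with $e(\theta) \le 3$, and from Corollary~\ref{cor:at_most_KE_labeled_internally} that at most $e(\theta)$ internal labels fit. So the only cases to worry about are the "wide/tall" orientations $\theta \in (0,\pi/4)$, $\theta \in (3\pi/2,7\pi/4)$, and $\theta \in [7\pi/4,2\pi)$, where $e(\theta) \in \{2,3\}$; in all other cases the claim is immediate. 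The key new ingredient is the removal of $Q(\ell)$: a point $q \in E(\ell)$ belongs to $Q(\ell)$ precisely when \emph{no} labeling of $S(\ell,u)$ is compatible with $q$ regardless of whether $q$ is internal or external, and I want to show that for the problematic orientations, the points of $E(\ell)$ far from the slab $\overline{S(\ell,u)}$ are exactly of this type once we also try to fit a third internal label.

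First I would set up coordinates as in Section~\ref{sec:Other_Directions} (rotate so $\theta$ maps to the negative $x$-axis) and recall the explicit description of $E(\ell)$ from the proof of Lemma~\ref{lem:sizes_E} for the relevant case, e.g.\ $E(\ell) \subset T_{\ell'} \cap R_\ell \cap \tilde{L_{r'}} \cap \tilde{B_\ell}$ for $\theta \in (0,\pi/4)$. The width-two extent is realized only near the "tip" where the slanted boundary line $\tilde{B_\ell}$ meets the vertical boundary; but any internal label anchored at a point $q$ near that tip is a unit square that must avoid $\ldr{\ell}$ and must not contain another point — and crucially its label sticks out of $E(\ell)$ into a region that, together with the requirement that the two other purported internal labels in $E(\ell)$ also be disjoint unit squares anchored strictly below the slab, forces an intersection with $\ldr{\ell}$ or with one of the other labels, or forces the point into $Q(\ell)$. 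The argument is a packing argument: three pairwise-disjoint unit squares, each anchored at a distinct point inside a region of "thickness" one (in the narrow direction) and bounded by a line of slope $\tan\theta$ on one side, cannot all simultaneously (i) avoid the separating leader $\ldr{\ell}$ and (ii) have their anchors survive in $Y(\ell)$ rather than $Q(\ell)$. I would make this precise by observing that sorting three such anchor points by the narrow coordinate produces a staircase (as in Lemma~\ref{lem:delta}), and then checking that the "outermost" step cannot have a compatible labeling of $S(\ell,u)$ — its label either crosses $\ldr{\ell}$ or crosses into $\overline{S(\ell,u)}$ in a way that blocks every label of $S(\ell,u)$, which is exactly the defining property of $Q(\ell)$.

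The main obstacle I anticipate is the case analysis bookkeeping: there are three orientation ranges with $e(\theta) \ge 2$ (and correspondingly, via the mirror symmetry, the $F$-side statement will need $f^*(\theta) \le 1$), and in the $e(\theta) = 3$ ranges I must rule out \emph{both} a third internal label and show that even the configurations with two internal labels that "reach too far" land in $Q(\ell)$. The cleanest route is probably to prove a single geometric sublemma — "if $q_1, q_2, q_3 \in E(\ell)$ have pairwise-disjoint unit-square labels, then the one whose anchor is extreme in the narrow direction lies in $Q(\ell)$" — using only the bounding box from Lemma~\ref{lem:sizes_E} and the position of $\ldr{\ell}$, and then instantiate it for each problematic $\theta$-range. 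Once that sublemma is in hand, the lemma follows: any $(\I, \E) \in \U^*_{\ell E}$ has $\I \subseteq Y(\ell)$, so $\I$ cannot contain three points with disjoint labels, hence $|\I| \le 2$.
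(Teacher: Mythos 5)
Your overall skeleton matches the paper's: invoke Corollary~\ref{cor:at_most_KE_labeled_internally} to dispose of every orientation with $e(\theta)\le 2$, and for the remaining range argue that the ``extra'' point of $E(\ell)$ cannot survive in $Y(\ell)=E(\ell)\setminus Q(\ell)$, contradicting $\I\subseteq Y(\ell)$. Two scoping remarks first: only $\theta\in(3\pi/2,7\pi/4)$ has $e(\theta)=3$, so that is the only case that needs any work --- the ranges with $e(\theta)=2$ are already immediate --- and you do not need to rule out two-label configurations that ``reach too far''; that is the business of Lemma~\ref{lem:b_determines_a} and Corollary~\ref{cor:point_set_choices}, not of this lemma.

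The genuine gap is in your decisive step. You justify membership in $Q(\ell)$ by asserting that the extreme point's label ``crosses into $\overline{S(\ell,u)}$ in a way that blocks every label of $S(\ell,u)$, which is exactly the defining property of $Q(\ell)$.'' It is not: $Q(\ell)$ consists of the points of $E(\ell)$ for which the status of the point (internal or external) cannot be compatible with any labeling of $S(\ell,u)$, and a single unit label anchored near the far end of $E(\ell)$ certainly does not block every label of $S(\ell,u)$, so the asserted sublemma is unproved and, as stated, false. The paper's mechanism is different and local: if $\I=\{a,b,c\}$, then since $E(\ell)$ fits in a $1\times 3$ box one of the three points, say $a$, lies in the triangular tip $\tilde{B_\ell}\cap\tilde{L_r}\cap T_r$ with $r=(\ell_x+1,\ell_y+1)$; for $\theta\in(3\pi/2,7\pi/4)$ any $p\in S(\ell,u)$ whose label meets $\lbl{a}$ lies to the bottom-left of $a$, hence $a\in\lbl{p}$. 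Consequently such a $p$ was already discarded into $\P_X$ in the preprocessing (its label contains another point), and moreover $\ldr{a}$, which emanates from $a\in\lbl{p}$, also crosses $\lbl{p}$ --- so $a$ interferes with $S(\ell,u)$ in the same way whether it is labeled internally or externally, which is what places $a$ outside $Y(\ell)$ and yields the contradiction with $\I\subseteq Y(\ell)$. Note also that your packing sublemma singles out the wrong point: three pairwise-disjoint unit labels inside a region of size $1\times e(\theta)$ have their anchors separated along the long axis, so it is the point extreme along that axis (the one caught in the slanted tip), not the one ``extreme in the narrow direction,'' that must be excluded. Without the containment observation $a\in\lbl{p}$ your argument does not establish that the third point lies in $Q(\ell)$.
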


\begin{proof}
  By Corollary~\ref{cor:at_most_KE_labeled_internally} there are at most
  $e(\theta)$ points in $E(\ell)$ that can be labeled internally
  simultaneously. Hence $|\I| \leq e(\ell)$. For $\theta \in [0,3\pi/2] \cup
  [7\pi/8,2\pi]$ we have $e(\theta) \leq 2$, and thus the lemma follows
  immediately. For $\theta \in (7\pi/8,2\pi)$ we have $e(\theta) \leq 3$. We
  prove this remaining case by contradiction.

  \begin{figure}[tb]
    \centering
    \includegraphics{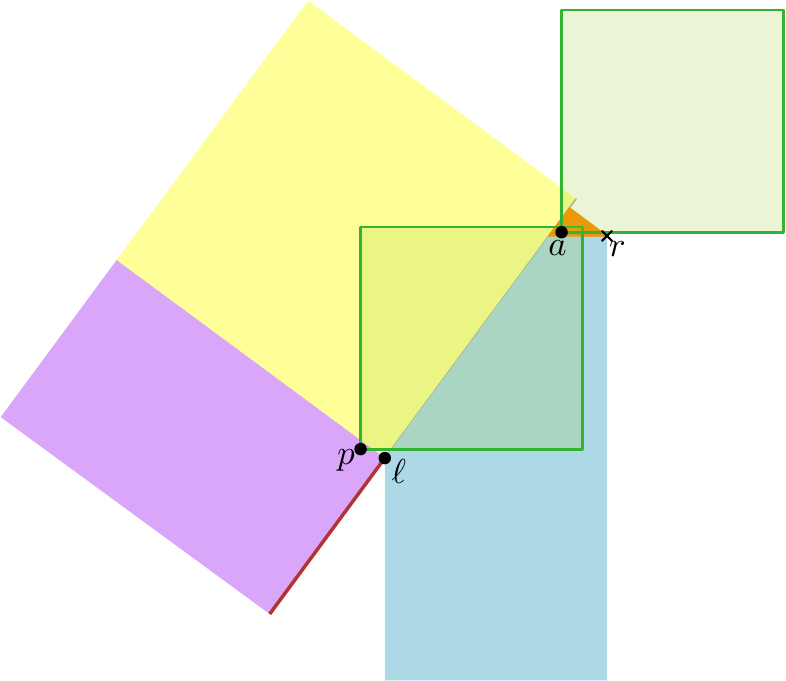}
    \caption{$S(\ell,u)$ is incompatible with any point $a \in \tilde{B_\ell}
      \cap \tilde{L_r} \cap T_r$ (the orange region). }
    \label{fig:ce_at_most_two}
  \end{figure}

  Assume that $\theta \in (7\pi/8,2\pi)$, and that $\I = \{a,b,c\}$. Since the
  region $E(\ell)$ has size $1 \times 3$, it follows that one of these points,
  say point $a$, lies in the triangular region $\tilde{B_\ell} \cap \tilde{L_r}
  \cap T_r$, with $r = (\ell_x+1,\ell_y+1)$. See
  Fig.~\ref{fig:ce_at_most_two}. Let $p$ be a point in $S(\ell,u)$ whose label
  intersects \lbl{a}. Using that $\theta \in (7\pi/8,2\pi)$ it follows that $p$
  is to the bottom-left of $a$. Since the labels are to the top-right of a
  point, we then obtain that $a \in \lbl{p}$. \frank{Actually, this is already
    a contradiction, since we assumed that we removed all those points. If we
    use that the proof actually already works for any configuration in
    $\U_{\ell{}E}$ (rather than only the configs in $\U^*_{\ell{}E})$}
  Therefore, the leader \ldr{a} also intersects \lbl{p}. So, both the label and
  the leader of $a$ interfere with $S(\ell,u)$, and thus $a \not\in X(\ell)$
  and thus also not in $\I \subseteq X(\ell)$. Contradiction.
\end{proof}

\begin{lemma}
  \label{lem:at_most_one_intersecting_label}
  Let $p$ be a point in $S(\ell,u)$, and let $(\I,\E) \in \U^*_{\ell{}E}$. The
  label of $p$ intersects at most one label \lbl{q} of a point $q \in \I$.
\end{lemma}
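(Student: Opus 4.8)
The plan is to reduce to the case $|\I|=2$ and then contradict the assumption that some label $\lbl{p}$, $p\in S(\ell,u)$, reaches two disjoint labels sitting inside the thin region $E(\ell)$. First I would note that only the angles with $e(\theta)\ge 2$ require any work: for $e(\theta)\le 1$ Corollary~\ref{cor:at_most_KE_labeled_internally} already gives $|\I|\le 1$ and the statement is vacuous. For the remaining angles, $\theta\in(0,\pi/4)\cup(3\pi/2,2\pi)$, Lemma~\ref{lem:CE_at_most_two} lets me assume $\I=\{a,b\}$ with $\lbl{a}$ and $\lbl{b}$ disjoint (they belong to a valid configuration), and I would then suppose for contradiction that $\lbl{p}$ meets both $\lbl{a}$ and $\lbl{b}$ for some $p\in S(\ell,u)$.

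The geometric core is to exploit that $E(\ell)$ is thin. By Lemma~\ref{lem:sizes_E} the open region $E(\ell)$ is contained in an axis-parallel box one of whose sides has length at most one, so along that short direction $a$ and $b$ differ by strictly less than one; hence disjointness of $\lbl{a}$ and $\lbl{b}$ forces them to be separated by at least one unit along the ``long'' direction of the box. I would carry out the case $\theta\in(0,\pi/4)$ in detail, where $E(\ell)$ has height at most one, so $|a_x-b_x|\ge 1$; calling $a$ the left point, $b_x\ge a_x+1$. Since $\lbl{p}$ is a unit square anchored at $p$ that meets both $\lbl{a}$ and $\lbl{b}$, its anchor satisfies $b_x-1<p_x<a_x+1$, which together with $b_x\ge a_x+1$ gives $a_x<p_x<a_x+1$. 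The proof of Lemma~\ref{lem:sizes_E} also shows $E(\ell)\subseteq R_\ell$, so $a_x>\ell_x$, and therefore $p_x>\ell_x$, that is, $p\in R_\ell$.

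To finish this case I would use that $p\in S(\ell,u)\subseteq\tilde{T_\ell}\cap\tilde{L_\ell}$ by definition, and check that for $\theta\in(0,\pi/4)$ this intersection is disjoint from $R_\ell$: the inequality for $\tilde{T_\ell}$ reads $p_y-\ell_y>(p_x-\ell_x)\tan\theta$, so $p_x>\ell_x$ and $\tan\theta>0$ force $p_y>\ell_y$; but then the inequality for $\tilde{L_\ell}$, namely $(p_x-\ell_x)\cos\theta+(p_y-\ell_y)\sin\theta<0$, cannot hold since both summands are positive. This contradiction settles $\theta\in(0,\pi/4)$. For $\theta\in(3\pi/2,7\pi/4)$ and $\theta\in[7\pi/4,2\pi)$, where $E(\ell)$ has width at most one, I would run the symmetric argument with the two coordinates interchanged: now $|a_y-b_y|\ge 1$, the anchor of $\lbl{p}$ is squeezed into a unit-height strip that pins $p$ onto the side of $\ell$ containing $E(\ell)$, which is again disjoint from $\tilde{T_\ell}\cap\tilde{L_\ell}\supseteq S(\ell,u)$; the required half-plane containments of $E(\ell)$ and the emptiness check are exactly the ones already worked out in the proof of Lemma~\ref{lem:sizes_E} for these angles.

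I expect the main obstacle to be the bookkeeping in this last step: getting the orientations of $\tilde{T_\ell}$, $\tilde{L_\ell}$, and $R_\ell$ (or $L_r$) right in each of the two remaining angular intervals and verifying that their common intersection is empty. Once the short side of $E(\ell)$ has been identified, the algebra is routine. As a safeguard for the interval where $e(\theta)$ could equal $3$, I would keep an alternative $Q(\ell)$-based argument in reserve: if $\lbl{p}$ reached two disjoint labels in $E(\ell)$, one of their anchors would lie inside $\lbl{p}$, so the corresponding leader would also meet $\lbl{p}$, placing that point in $Q(\ell)$ and hence outside $Y(\ell)\supseteq\I$ --- exactly as in the proof of Lemma~\ref{lem:CE_at_most_two}.
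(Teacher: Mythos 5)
Your reduction to $|\I|=2$ via Lemma~\ref{lem:CE_at_most_two}, and the use of the thinness of $E(\ell)$ to force the two anchors to be at least one unit apart along the long axis of $E(\ell)$, match the paper; the label-overlap algebra pinning $p$'s anchor (e.g.\ $a_x<p_x<a_x+1$ for $\theta\in(0,\pi/4)$) is also fine. The gap is in your final contradiction step. You claim that for $\theta\in(0,\pi/4)$ the region $\tilde{T_\ell}\cap\tilde{L_\ell}\supseteq S(\ell,u)$ is disjoint from $R_\ell$, and you verify this with the inequalities $p_y-\ell_y>(p_x-\ell_x)\tan\theta$ and $(p_x-\ell_x)\cos\theta+(p_y-\ell_y)\sin\theta<0$. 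These formulas correspond to leaders sloping \emph{down}-left for $\theta\in(0,\pi/4)$, which is not the convention used in Lemma~\ref{lem:sizes_E}, whose conclusions (height of $E(\ell)$ at most one, $E(\ell)\subset R_\ell$) you import: there the leader direction is $(-\cos\theta,\sin\theta)$ (for $\theta=\pi/2$ the leader points straight up, which is exactly what forces $E(\ell)\subset B_{\ell'}$ in that case; for $\theta\in(5\pi/4,3\pi/2)$ it makes $\ell$ the topmost point). With that orientation, $\tilde{T_\ell}$ is the side above a line of slope $-\tan\theta$ and $\tilde{L_\ell}$ reads $(p_x-\ell_x)\cos\theta-(p_y-\ell_y)\sin\theta<0$, so $S(\ell,u)$ \emph{does} meet $R_\ell$: any $p$ with $p_y-\ell_y>(p_x-\ell_x)\cot\theta>0$ lies in it. This is not a repairable sign slip in an otherwise sound argument---the reason $E(\ell)$ can have width up to two in Lemma~\ref{lem:sizes_E} for $\theta\in(0,\pi/4)$ is precisely that labels of points of $S(\ell,u)$ reach to the right of $\ell$; if $S(\ell,u)\cap R_\ell$ were empty, the configuration analysis for these angles would be largely vacuous. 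The same false disjointness claim underlies your sketch for $\theta\in(3\pi/2,2\pi)$, so neither case is closed.

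For comparison, the paper's contradiction is purely positional and does not need any such half-plane disjointness: for $\theta\in(0,\pi/4)$ it uses $p_y\ge\ell_y\ge b_y$ (with $b$ the leftmost of the two points) and $p\notin\lbl{b}$ to get $p_x<b_x$, after which the unit width of $\lbl{p}$ makes $\lbl{a}$ with $a_x>b_x+1$ unreachable; for $\theta\in(3\pi/2,2\pi)$ it argues symmetrically that $p$ lies to the top-left of the topmost point $a$, so $p_y>a_y>b_y+1$ and $\lbl{b}$ is out of reach. A valid contradiction is in fact available close to where you stopped: with the correct orientation every $a\in E(\ell)$ satisfies $a_y<\ell_y\le p_y$, so your squeeze $a_x<p_x<a_x+1$ together with $|p_y-a_y|<1$ would place $p$ inside $\lbl{a}$, contradicting the removal of $\P_X$---but that is not the argument you gave. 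Finally, your $Q(\ell)$-based fallback does not hold as stated: $\lbl{p}$ meeting two disjoint labels does not force one of their anchors into $\lbl{p}$ (in the case $\theta\in(0,\pi/4)$ both $a$ and $b$ lie below $\ell_y\le p_y$ and hence outside $\lbl{p}$).
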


\begin{proof}
  When $|\I| \leq 1$, the lemma is trivially true. By
  Lemma~\ref{lem:CE_at_most_two}, we otherwise have $|\I| \leq 2$ and $\theta
  \in (0,\pi/4) \cup (3\pi/2,2\pi)$. Let $\I = \{a,b\}$. For the case $\theta
  \in (0,\pi/4)$, assume w.l.o.g. that $b$ is the leftmost point. Since $a$ and
  $b$ are both in \I, their labels are disjoint, and thus we have $b_x + 1 <
  a_x$ (See Fig.~\ref{fig:unique_intersection}(a)). Furthermore, we have $p_y
  \geq \ell_y \geq b_y$. Since \lbl{p} intersects \lbl{b}, but $p
  \not\in\lbl{b}$ this means $p_x < b_x$. Since \lbl{p} has width one, it thus
  cannot intersect \lbl{a}.

  \begin{figure}[h]
    \centering
    \includegraphics{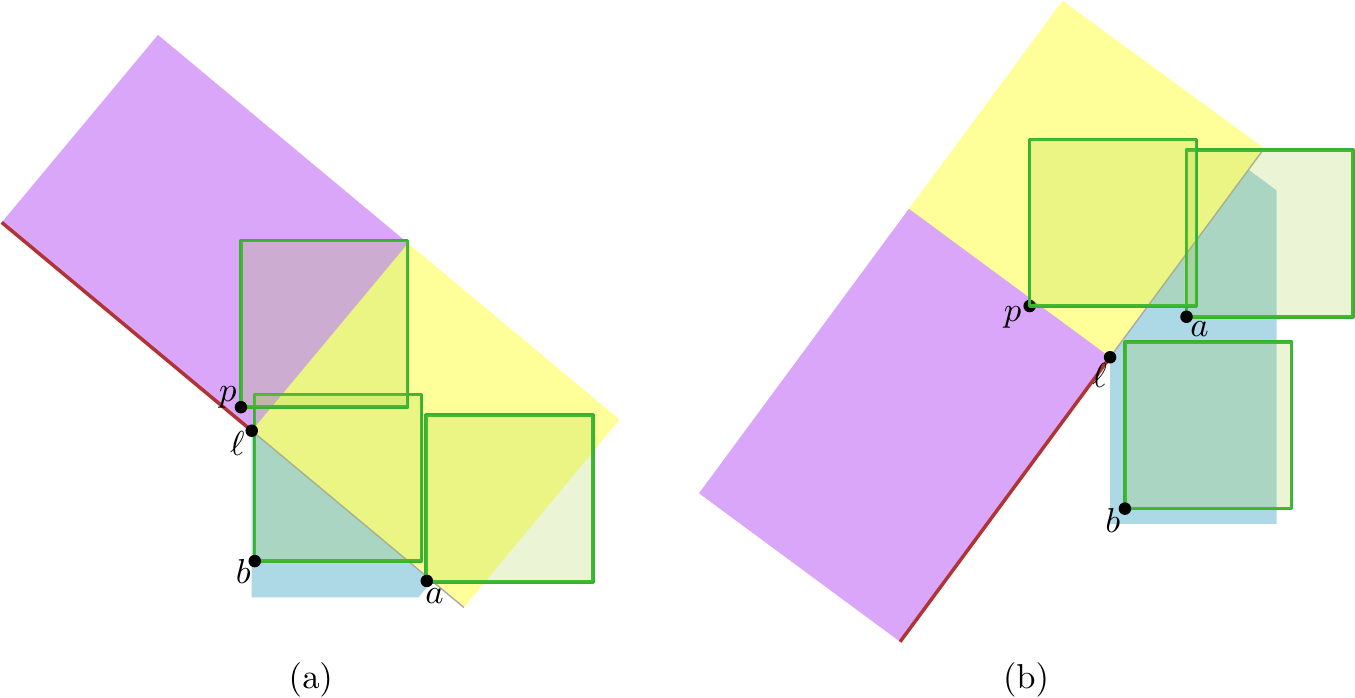}
    \caption{Point $p \in S(\ell,u)$ can intersect only one label of a point in
    $E(\ell)$. \frank{maybe include for which $\theta$ values this is.}}
    \label{fig:unique_intersection}
  \end{figure}

  For the case $\theta \in (3\pi/2,2\pi)$ we use a similar argument. Assume
  w.l.o.g. that $a$ is the topmost point, and thus $a_y > b_y + 1$. See
  Fig.~\ref{fig:unique_intersection}(b). Since $p \in S(\ell,u)$, \lbl{p}
  intersects \lbl{a}, and $a \not\in\lbl{p}$ we have that $p$ is to the
  top-left of $a$. And thus, $p_y > a_y > b_y + 1$. The label of $b$ has height
  at most one, and thus cannot intersect \lbl{p}.
\end{proof}

\begin{lemma}
  \label{lem:b_determines_a}
  Let $\C = (\{a,b\},\E) \in \U^*_{\ell{}E}$, where $a$ is to the right of $b$ if
  $\theta \in (0,\pi/4)$, and to the top of $b$ if $\theta \in
  (3\pi/4,2\pi)$. Point $b$ uniquely determines $a$.
\end{lemma}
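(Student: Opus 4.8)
The plan is a direct geometric argument, after first pinning down when a two-element internal set can occur. By Lemma~\ref{lem:CE_at_most_two} a configuration $(\I,\E)\in\U^*_{\ell{}E}$ can have $|\I|=2$ only for $\theta\in(0,\pi/4)$ (and then, as in the statement, the two internal points differ in $x$-coordinate, $a$ lying to the right of $b$) or for $\theta\in(3\pi/2,2\pi)$ (and then they differ in $y$-coordinate, $a$ lying above $b$). I would treat $\theta\in(0,\pi/4)$ in full; the range $\theta\in(3\pi/2,2\pi)$ follows by the same argument with the roles of $x$ and $y$ (and of ``right'' and ``above'') exchanged, just as in the two cases of the proof of Lemma~\ref{lem:at_most_one_intersecting_label}. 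The claim to establish is: there do not exist distinct $a,a'$ with both $(\{a,b\},\cdot)$ and $(\{a',b\},\cdot)$ in $\U^*_{\ell{}E}$.

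\emph{Localising the second point.} Fix $b\in E(\ell)$ and let $a$ be any point with $(\{a,b\},\E)\in\U^*_{\ell{}E}$ and $a$ to the right of $b$. Since $a,b\in\I$ their labels are disjoint; as $E(\ell)$ has height less than $1$ (Lemma~\ref{lem:sizes_E}, case $\theta\in(0,\pi/4)$), the two unit squares cannot be separated vertically, so exactly as in the proof of Lemma~\ref{lem:at_most_one_intersecting_label} we get $a_x>b_x+1$. By Lemma~\ref{lem:sizes_E}, $E(\ell)$ fits in a box of width at most $2$ whose left edge lies on $x=\ell_x$ (since $E(\ell)\subset R_\ell$), so $\ell_x<b_x$ and $a_x<\ell_x+2$. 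Hence $b$ lies in the strip $\ell_x<x<\ell_x+1$ (otherwise no admissible $a$ exists and the lemma is vacuous), and every admissible $a$ lies in $W(b):=\{\,x>b_x+1\,\}\cap E(\ell)$, whose $x$-range runs from $b_x+1$ to less than $\ell_x+2<b_x+2$, i.e.\ a region of width less than~$1$.

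\emph{Uniqueness.} Suppose, for contradiction, that $a,a'\in W(b)\cap\P$ are distinct and both $(\{a,b\},\E),(\{a',b\},\E')\in\U^*_{\ell{}E}$; say $a'_x<a_x$. Since $a,a'\in\P$ (recall $\P$ now excludes $\P_X$), $\lbl{a'}$ contains no point of $\P$, in particular not $a$; because $0<a_x-a'_x<1$ and $E(\ell)$ has height less than $1$, this is possible only if $a_y<a'_y$, so $a'$ is to the upper left and $a$ to the lower right, with the labels staircase-ordered. Now consider $(\{a,b\},\E)$, in which every point of $E(\ell)\setminus\{a,b\}$ — in particular $a'$ — is external. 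I would then show that from the position forced on $a$ (lower-right point of the narrow region $W(b)$), $S(\ell,u)$ has no labelling compatible with $a$, by an argument analogous to the ``corner'' argument in the proof of Lemma~\ref{lem:CE_at_most_two} combined with the requirement that $\lbl{a}$ avoid the leader $\ldr{\ell}$: either $\lbl{a}$ necessarily meets $\ldr{\ell}$ (so $a$ cannot be internal), or a point $p\in S(\ell,u)$ whose label meets $\lbl{a}$ — which for $\theta\in(0,\pi/4)$ must lie on the side of $a$ fixed by $\theta$ — satisfies $a\in\lbl{p}$, so $p\in\P_X$ and $\ldr{p}$ also meets $\lbl{a}$ (so $a$ cannot usefully be external either). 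Hence $a\in Q(\ell)$, contradicting $\I=\{a,b\}\subseteq Y(\ell)$. So $W(b)\cap\P$ yields at most one candidate, i.e.\ $b$ uniquely determines $a$. For $\theta\in(3\pi/2,2\pi)$ one repeats this after the $x\leftrightarrow y$ reflection, using the height bound on $E(\ell)$ from Lemma~\ref{lem:sizes_E} and noting that, after discarding the $Q(\ell)$-incompatible points as in Lemma~\ref{lem:CE_at_most_two}, the relevant strip again has effective height less than~$1$; here the corner argument is literally the one from the proof of Lemma~\ref{lem:CE_at_most_two}.

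\emph{Main obstacle.} The delicate part is the uniqueness step. Lemma~\ref{lem:sizes_E} bounds the width (resp.\ height) of $E(\ell)$ only by a value just short of~$2$, so ``there is room for only one more label past $\lbl{b}$'' is genuinely tight; one must track precisely which corner of which unit square touches $\ldr{\ell}$ or the label of the obstructing point of $S(\ell,u)$, and verify this uniformly for all $\theta$ in the open interval. The localisation step and the reduction between the two $\theta$-ranges are routine by comparison.
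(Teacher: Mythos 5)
Your localisation step (disjointness of \lbl{a} and \lbl{b} inside a region of height at most one forces $a_x>b_x+1$, and the width bound of Lemma~\ref{lem:sizes_E} confines any second candidate to a strip of width less than one) agrees with the start of the paper's argument, and restricting to $\theta\in(0,\pi/4)\cup(3\pi/2,2\pi)$ via Lemma~\ref{lem:CE_at_most_two} is fine. The gap is the uniqueness step, which is precisely the part you defer (``I would then show\dots''), and the mechanism you sketch for it does not work. You want to conclude $a\in Q(\ell)$ from the dichotomy ``either \lbl{a} meets \ldr{\ell}, or every $p\in S(\ell,u)$ with $\lbl{p}\cap\lbl{a}\neq\emptyset$ has $a\in\lbl{p}$''. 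For $\theta\in(0,\pi/4)$ neither branch can hold: $E(\ell)\subset R_\ell$ and labels extend to the top-right, while \ldr{\ell} leaves $\ell$ towards the bottom-left, so \lbl{a} never meets \ldr{\ell}; and a point $p\in S(\ell,u)$ whose label meets \lbl{a} lies above and to the left of $a$ (the paper's own computation gives $p_y>b_y+1\ge a_y$), so the intersection is of staircase type and $a\notin\lbl{p}$. The ``corner'' argument of Lemma~\ref{lem:CE_at_most_two} is specific to the triangular region occurring for $\theta\in(3\pi/2,7\pi/4)$, where the obstructing point necessarily lies to the bottom-left of $a$; it does not transfer to $(0,\pi/4)$. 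Note also that any argument forcing $a\in Q(\ell)$ would prove too much: the hypothesis of the lemma is exactly that $(\{a,b\},\E)\in\U^*_{\ell{}E}$, i.e.\ that $a$ \emph{is} compatible, so a correct proof cannot proceed by showing the second internal point is always incompatible.

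The paper's proof uses a different, and essential, idea that your sketch is missing. Since $a$, $a'$ and $b$ all belong to the influence region, each of their labels intersects the label of some point of $S(\ell,u)$, and by Lemma~\ref{lem:at_most_one_intersecting_label} the point obstructed by \lbl{a} (resp.\ \lbl{a'}) is distinct from the one obstructed by \lbl{b}; from this one deduces the relative positions $a_y,a'_y>b_y$ (case $\theta\in(0,\pi/4)$), resp.\ that $a'$ lies to the top-right of $b$ (case $\theta\in(3\pi/2,2\pi)$). The contradiction is then drawn not from $a$ but from the \emph{other} candidate: in the configuration $\C=(\{a,b\},\E)$ the point $a'$ is labeled externally, and from its derived position its leader of slope $\theta$ must cross \lbl{b}, so $\C\notin\U^*_{\ell{}E}$. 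Without this ``the other candidate's leader crosses \lbl{b}'' step (or some substitute for it), your argument does not establish uniqueness; your observation that $a$ and $a'$ are staircase-ordered is correct but is never turned into a contradiction.
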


\begin{proof}
  We prove this by contradiction. Assume that there is another configuration
  $\C' = (\{a',b\},\E') \in \U^*_{\ell{}E}$, such that $a' \in Y(\ell)$ is to
  the right of $b$ in case $\theta \in (0,\pi/4)$, or above $b$ in case $\theta
  \in (3\pi/4,2\pi)$.

  The points $a$ and $b$ can influence the labeling of $S(\ell,u)$. Hence,
  their labels intersect with labels of points in $S(\ell,u)$. By
  Lemma~\ref{lem:at_most_one_intersecting_label} \lbl{a} and \lbl{b} cannot
  both intersect the same label of a point in $S(\ell,u)$. So, let $p$ be the
  point whose label intersects \lbl{a}, and let $q$ be the point whose label
  intersects \lbl{b}. See Fig.~\ref{fig:b_determines_a}.

  \begin{figure}[tb]
    \centering
    \includegraphics{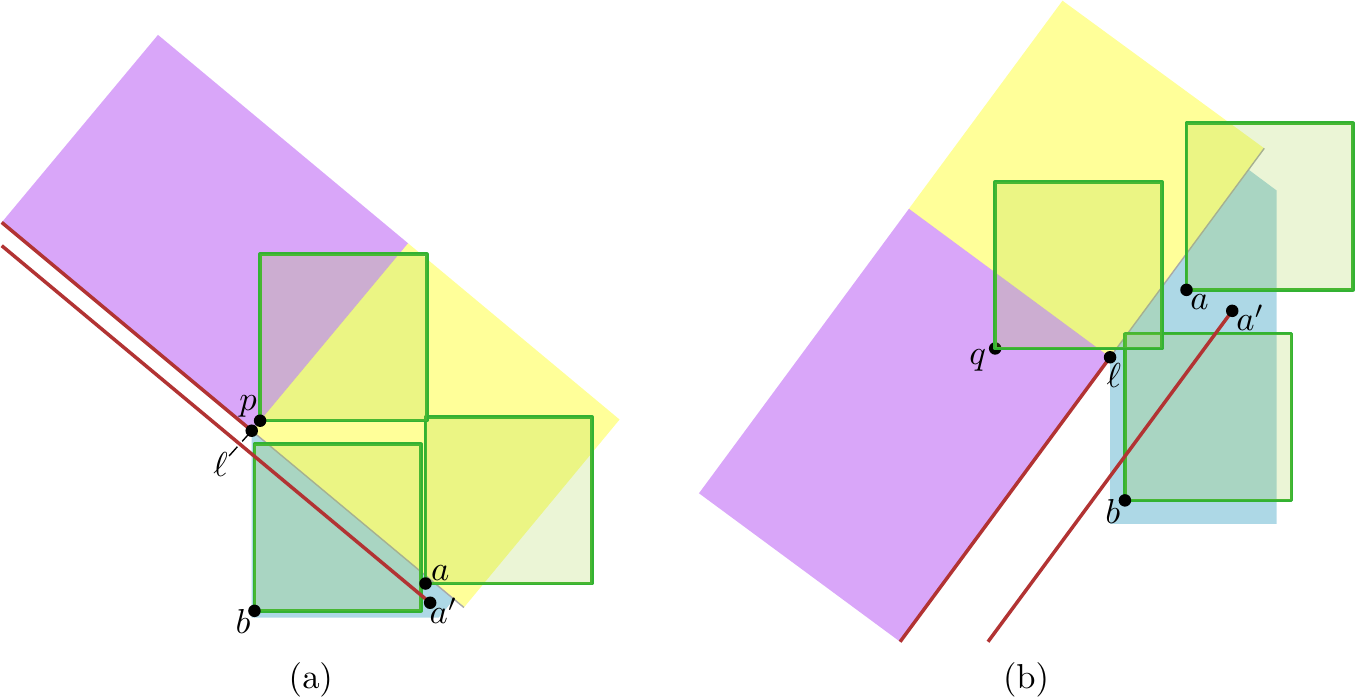}
    \caption{The points $a, a'$, and $b$ in $E(\ell)$ and the label(s) in
      $S(\ell,u)$ they intersect. (Fig (a) is not on scale.) \frank{same here}}
    \label{fig:b_determines_a}
  \end{figure}

  We start with the case $\theta \in (0,\pi/4)$. Point $a$ is to the right of
  $b$, and \lbl{a} and \lbl{b} are disjoint. Hence, $a_x > b_x + 1$. Since
  \lbl{p} intersects \lbl{a}, and $p \not\in\lbl{b}$, it follows that $p$ lies
  above \lbl{p}. Hence $p_y > b_y + 1$. Again using that \lbl{p} intersects
  \lbl{a}, it then also follows that $a_y > b_y$. Using the same argument we
  get $a'_y > b_y$. Now in the configuration \C, point $a'$ is labeled
  externally. However, this means its leader intersects \lbl{b}. Hence, $\C
  \not\in \U^*_{\ell{}E}$. Contradiction.

  For the case $\theta \in (3\pi/2,2\pi)$ we have that $b_y+1 < a_y$. As in the
  proof of Lemma~\ref{lem:CE_at_most_two} we have that $a_y < \ell_y + 1$, and
  hence $b_y < \ell_y$. Using that \lbl{q} does not intersect \ldr{\ell}, we
  have $q_y > \ell_y$. Since $q \not\in \lbl{b}$ it follows that $q_x <
  b_x$. In configuration $\C'$, \lbl{a'} and \lbl{b} are also pairwise
  disjoint. It then follows that $a'$ is to the top-right of $b$. In the
  configuration $\C=(\{a,b\},\E)$, point $a'$ is labeled externally. However,
  this means its leader intersects \lbl{b}. Hence, $\C \not\in
  \U^*_{\ell{}E}$. Contradiction.
\end{proof}

\begin{corollary}
  \label{cor:point_set_choices}
  The number of configurations in $\U^*_{\ell{}E}$ is at most
  $O(n^{e'(\theta)} + \delta^{e^*(\theta)})$, where $e^*(\theta) =
  \min\{1,e(\theta)\}$.
\end{corollary}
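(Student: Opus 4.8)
The plan is to split $\U^*_{\ell{}E}$ along the same dichotomy used in the definition of $\U_{\ell{}E}$: the configurations $(\emptyset,\{p\})$ that label a single point $p\in E''(\ell)$ externally, and the configurations $(\I,\emptyset)$ in which every influencing point of $E(\ell)$ is labeled internally (with $\I\subseteq Y(\ell)$ by definition of $\U^*_{\ell{}E}$). I would bound the two parts separately and sum.

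For the external part, each such configuration is determined by its single point $p\in\P\cap E''(\ell)$, so there are at most $n$ of them; moreover there are none at all unless $E''(\ell)$ can be nonempty, which is precisely the case $e'(\theta)=1$. Hence this part contributes $O(n^{e'(\theta)})$ configurations. For the internal part, I would first invoke Lemma~\ref{lem:CE_at_most_two} to restrict to $|\I|\le 2$, then case on $|\I|$. The case $|\I|=0$ gives one configuration. For $|\I|=1$, the unique point of $\I$ is labeled internally and lies in $E(\ell)$, which is covered by $O(1)$ unit squares since $e(\theta)\le 3$; Lemma~\ref{lem:delta} then bounds the number of candidates by $O(\delta)$, so there are $O(\delta)$ such configurations. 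For $|\I|=2$, writing $\I=\{a,b\}$ with $b$ the ``lower-left'' point, Lemma~\ref{lem:b_determines_a} says that $b$ determines $a$, so the number of such configurations is at most the number of points of $E(\ell)$ that can be labeled internally as $b$, again $O(\delta)$ by Lemma~\ref{lem:delta}. Since $|\I|\ge 1$ forces $e(\theta)\ge 1$, the internal part contributes $O(1)=O(\delta^0)$ when $e(\theta)=0$ and $O(\delta)=O(\delta^1)$ when $e(\theta)\ge 1$, i.e.\ $O(\delta^{e^*(\theta)})$ with $e^*(\theta)=\min\{1,e(\theta)\}$. Adding the two estimates gives the stated bound $O(n^{e'(\theta)}+\delta^{e^*(\theta)})$.

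The substantive work has already been done in Lemmas~\ref{lem:CE_at_most_two} and \ref{lem:b_determines_a}; without them one is stuck with the $O(\delta^{e(\theta)})$ bound of Lemma~\ref{lem:bottom_configurations}. In this corollary the only point to watch is the exponent bookkeeping: charging the external configurations to $n^{e'(\theta)}$ rather than to $\delta$, and using the ``$b$ determines $a$'' uniqueness to collapse the naive $\delta^2$ count for $|\I|=2$ down to $\delta$, so that the $\delta$-exponent never exceeds one. That last collapse — making sure the two-point case reuses a single $O(\delta)$ pool of admissible choices for $b$ rather than multiplying two such pools — is the only place I expect any (mild) subtlety.
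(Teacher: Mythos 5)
Your proposal is correct and matches the paper's intent: the corollary is stated without a separate proof precisely because it follows, as you argue, by splitting $\U^*_{\ell{}E}$ into the external-type configurations $(\emptyset,\{p\})$ (charged to $n^{e'(\theta)}$) and the internal-type ones, where Lemma~\ref{lem:CE_at_most_two} caps $|\I|$ at two, Lemma~\ref{lem:delta} gives $O(\delta)$ candidates in the constant-size region $E(\ell)$, and Lemma~\ref{lem:b_determines_a} collapses the two-point case to a single $O(\delta)$ choice, yielding $O(\delta^{e^*(\theta)})$. Your exponent bookkeeping, including the observation that a nonempty $\I$ forces $e(\theta)\geq 1$, is exactly the intended argument.
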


\begin{figure}[tbh]
  \centering
  \includegraphics{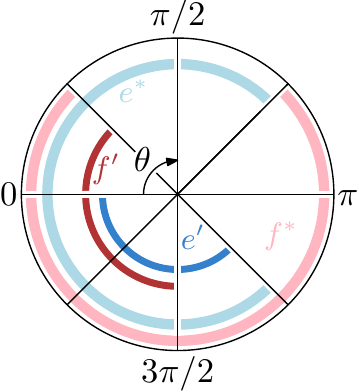}
  \caption{A depiction of the upper bounds on $e^*, e', s^*F,$ and $f'$
    as a function of $\theta$.}
  \label{fig:configurations}
\end{figure}

We can again use a symmetric argument for the number of configurations in
$\tilde{T_u}$. Figure~\ref{fig:configurations} gives a graphical summary of
these results. We now redefine $\iota(n,\delta,\theta)$ as

\begin{alignat*}{19}
 \iota(n,\delta,\theta) =
                        & && n^{2e'(\theta)+2f'(\theta)}&&
                        &&+&& n^{2e'(\theta)+f'(\theta)} &&\delta^{f^*(\theta)}
                        &&+&& n^{e'(\theta) + 2f'(\theta)}&&\delta^{e^*(\theta)}\\
                         &+&& n^{e'(\theta)+f'(\theta)}  &&\delta^{e^*(\theta)+f^*(\theta)}
                        &&+&& n^{2e'(\theta)}           &&\delta^{2f^*(\theta)}
                        &&+&& n^{2f'(\theta)}           &&\delta^{2e^*(\theta)}\\
                         &+&& n^{e'(\theta)}            && \delta^{e^*(\theta)+2f^*(\theta)}
                        &&+&& n^{f'(\theta)}            &&\delta^{2e^*(\theta)+f^*(\theta)}
                        &&+&&                         &&\delta^{2e^*(\theta)+2f^*(\theta)}.
\end{alignat*}

\noindent and obtain the following result, which is at most $O(n^7)$ for
$\delta = O(n)$:

\begin{theorem}
  \label{thm:other_leaders}
  Given a set \P of $n$ points and an angle $\theta$, we can compute a labeling
  of \P that maximizes the number of internal labels in $O(n^3(\log n +
  \iota(n,\delta,\theta)))$ time and $O(n^2\sqrt{\iota(n,\delta,\theta)})$
  space, where $\delta = \min \{n, 1/d_{\min}\}$ for the minimum distance
  $d_{\min}$ in \P, and $\iota(n,\delta,\theta)$ models how much subproblems
  can influence each other. More formally,
\begin{alignat*}{18}
 \iota(n,\delta,\theta) =
                        & && n^{2e'(\theta)+2f'(\theta)}
                        &&+&& n^{2e'(\theta)+f'(\theta)}\delta^{f^*(\theta)}
                        &&+&& n^{e'(\theta) + 2f'(\theta)}\delta^{e^*(\theta)}\\
                         &+&& n^{e'(\theta)+f'(\theta)}  \delta^{e^*(\theta)+f^*(\theta)}
                        &&+&& n^{2e'(\theta)}           \delta^{2f^*(\theta)}
                        &&+&& n^{2f'(\theta)}           \delta^{2e^*(\theta)}\\
                         &+&& n^{e'(\theta)}            \delta^{e^*(\theta)+2f^*(\theta)}
                        &&+&& n^{f'(\theta)}            \delta^{2e^*(\theta)+f^*(\theta)}
                        &&+&&                         \delta^{2e^*(\theta)+2f^*(\theta)},
\end{alignat*}%
\noindent
where $e^*(\theta),f^*(\theta), e'(\theta)$, and $f'(\theta)$ are all at most
one.
\end{theorem}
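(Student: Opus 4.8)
The plan is to combine the recurrence, the dynamic-programming scheme, and the precomputation of $\Psi$ developed in this section with the improved configuration counts of Section~\ref{sub:An_Improved_Bound_on_the_Number_of_Configurations}.

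\textbf{Correctness.} First I would argue that the recurrence for $\lab{\ell,u,\C_{\ell{}E},\C_{uF}}$ is correct when the refined universes $\U^*_{\ell{}E},\U^*_{uF},\U^*_{pE},\U^*_{pF}$ are used in place of $\U_{\ell{}E}$ etc., by the same argument as Lemma~\ref{lem:combining}. Given an optimal labeling of $S(\ell,u)$ consistent with the constraints encoded by $\C_{\ell{}E}$ and $\C_{uF}$: if every point of $S(\ell,u)$ is internal we are in the $\Psi(S(\ell,u))$ branch; otherwise let $p$ be the ``rightmost'' externally labeled point, and split the internal points into the part in $\tilde{R}_p$, the part in $S(\ell,p)$, and the part in $S(p,u)$. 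The first part equals $|\tilde{R}_p\cap S(\ell,u)|$ because $p$ is ``rightmost''. The rotated generalizations of Lemmas~\ref{lem:no_point_above} and \ref{lem:one_point_below}, together with the influence regions $E,E',F,F'$ and Observation~\ref{obs:closest_leader}, show that the two sub-labelings are independent and that the configurations they induce on $\tilde{B_p}$ and $\tilde{T_p}$ are exactly those \emph{compatible} with the partial labeling, i.e.\ those in $\varrho(p,\ell,u,\C_{\ell{}E})$ and $\varsigma(p,\ell,u,\C_{uF})$; an exchange argument makes both sub-labelings optimal, which yields the recurrence. Adding the dummy points $p_\infty,p_{-\infty}$ as in Section~\ref{sec:left_leaders}, the overall optimum is $\lab{p_{-\infty},p_\infty,(\emptyset,\emptyset),(\emptyset,\emptyset)}$.

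\textbf{Running time.} The table has $O(n^2|\U^*_{\ell{}E}||\U^*_{uF}|)$ entries, and each entry maximizes over $O(n|\U^*_{pE}||\U^*_{pF}|)$ terms, each needing one precomputed value $\Psi'(p,\C_{\ell{}E},\C_{uF})$. For a fixed pair $(\ell,u)$ these values over all $p$ and all configuration pairs are precomputed in $O(n\log n + |\U^*_{\ell{}E}||\U^*_{uF}| + (|\U^*_{\ell{}E}|+|\U^*_{uF}|)\log n)$ time: the base case $\Psi'(p,(\emptyset,\emptyset),(\emptyset,\emptyset))$ by the sweep of Section~\ref{sec:left_leaders} in $O(n\log n)$ time, and for every other configuration pair the ``rightmost'' conflicting point via a constant number of range-tree and orientation-$\theta$ ray-shooting queries of $O(\log n)$ each, after $O(n\log n)$ preprocessing per $(\ell,u)$, with the dynamic structure $\D$ updated in $O(\log n)$ time between successive configurations. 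Summing over the $O(n^2)$ pairs $(\ell,u)$, precomputation costs $O(n^3\log n)$ and the DP costs $O(n^3|\U^*_{\ell{}E}||\U^*_{uF}||\U^*_{pE}||\U^*_{pF}|)$, for a total of $O(n^3\log n + n^3|\U^*_{\ell{}E}||\U^*_{uF}||\U^*_{pE}||\U^*_{pF}|)$.

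\textbf{Substituting the bounds and the space bound.} By Corollary~\ref{cor:point_set_choices}, $|\U^*_{\ell{}E}|=O(n^{e'(\theta)}+\delta^{e^*(\theta)})$ with $e^*(\theta)=\min\{1,e(\theta)\}\le 1$, and symmetrically $|\U^*_{uF}|=O(n^{f'(\theta)}+\delta^{f^*(\theta)})$ with $f^*(\theta)=\min\{1,f(\theta)\}\le 1$; since $E(p)$ and $F(p)$ are influence regions of the same types as $E(\ell)$ and $F(u)$, the identical bounds hold for $|\U^*_{pE}|$ and $|\U^*_{pF}|$. Hence the critical product is $O\bigl((n^{e'}+\delta^{e^*})^2(n^{f'}+\delta^{f^*})^2\bigr)$, and expanding these two squared binomials term by term yields exactly the nine monomials in the displayed definition of $\iota(n,\delta,\theta)$, giving the time bound $O(n^3(\log n+\iota(n,\delta,\theta)))$. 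For space, the dominant structure is the table of size $O(n^2|\U^*_{\ell{}E}||\U^*_{uF}|)$; since $(|\U^*_{\ell{}E}||\U^*_{uF}|)^2=O\bigl((n^{e'}+\delta^{e^*})^2(n^{f'}+\delta^{f^*})^2\bigr)=O(\iota)$ we get $|\U^*_{\ell{}E}||\U^*_{uF}|=O(\sqrt{\iota})$ and thus $O(n^2\sqrt{\iota})$ space, while every auxiliary structure for a fixed $(\ell,u)$ uses only $O(n\log n)$ space and is rebuilt. Finally $e^*,f^*\le 1$ by definition, $e'(\theta)\in\{0,1\}$ by its definition, and $f'(\theta)\le 1$ by Lemma~\ref{lem:top_configurations}.

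\textbf{Main obstacle.} The only genuinely delicate step is the correctness argument: one must reconcile the four complications flagged at the start of Section~\ref{sec:Other_Directions} — most importantly that $E(p)$ may extend ``left'' of $p$, so the ``rightmost'' externally labeled point $p$ has to be chosen so that no point forced internal by the sub-problem lies to its ``right'' — with the case analysis underlying Lemma~\ref{lem:combining}, and one must verify that $\varrho$ and $\varsigma$ capture precisely the compatible configurations. The running-time and space claims are then routine bookkeeping together with the binomial expansion above.
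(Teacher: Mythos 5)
Your proposal is correct and follows essentially the same route as the paper: the DP recurrence over subproblems $S(\ell,u)$ augmented with boundary configurations, the $O(n\log n)$-per-pair precomputation of $\Psi'$ via the sweep plus range-tree/ray-shooting queries, and the substitution of the improved configuration bounds $O(n^{e'(\theta)}+\delta^{e^*(\theta)})$ and $O(n^{f'(\theta)}+\delta^{f^*(\theta)})$, whose product expansion gives the nine monomials of $\iota$ and the $O(n^2\sqrt{\iota})$ space bound. Even the level of detail matches: like the paper, you defer the correctness of the generalized recurrence to an argument "similar to Lemma~\ref{lem:combining}" mediated by the compatibility maps $\varrho$ and $\varsigma$, which is exactly how the paper handles the four complications as well.
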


\section{Extensions}\label{sec:extensions}

So far, we have considered a stylized version of the question we set out to solve. In this section we discuss how our solution may be adapted and extended, depending on the exact requirements of the application. 

\subsection {Optimizing the direction}
\label{sub:optimize_directions}

Rather than fixing the direction for the leaders in advance, we may be willing to let the algorithm specify the optimal orientation that maximizes the number of points that can be labeled internally. Or, perhaps we wish to compute a chart that plots the maximum number of internally labeled points as a function of the leader orientation $\theta$, leaving the final decision to the judgement of the user.

In both scenarios, we need to efficiently iterate over all possible orientations. We adapt our method straightforwardly. 
Let $Q$ be the set of all $4n$ corner points of all potential labels. For every pair
$p,q \in Q$ consider the slope $\theta_{p,q}$ of the line through $p$ and
$q$. All values $\theta_{p,q}$ partition all possible angles into $O(n^2)$
intervals.
For all values $\theta$ in the same interval $J$, any leader $\ldr{p}$
intersects the same set of potential labels, so the optimal set of internal
labels is constant throughout $J$. We compute it separately for each interval.
\maarten {Did we think about whether we can reuse some of the information we already computed? Only two points change order each time. I think it can happen that we need to recompute everything, because the change is low down in the DP order, but can this happen for each of the $n^2$ events? Seems unlikely...}

By applying Theorem~\ref {thm:other_leaders}, we achieve a total of $O(n^2
\cdot n^3(\log n + \iota(n,\delta,\theta))) = O(n^5(\log n +
\iota(n,\delta,\theta)))$ time to compute the optimal labelings for all
orientations, or to optimize the orientation by performing a simple linear
scan.

\subsection{Routing the outer leaders}\label{sec:place_external}

Once the core combinatorial problem of deciding which points have to be labeled
internally is solved, it remains to route the outer leaders and place the external labels.
Since our goal in this paper is to maximize the number of internally labeled points, we are only interested in finding a valid labeling, in which neither labels nor leaders intersect each other.
Let us assume that $\theta \in [0,\pi/2] \cup [3\pi/2, 2\pi]$, i.e., all external labels are oriented to the left. 
The case of labels oriented to the right is symmetric.
We consider the leaders in counterclockwise order around the boundary of \M and place them one by one starting with the topmost leader.
The first label is placed with its lower right corner anchored at the endpoint of its inner leader.
For all subsequent labels we test if the label anchored at the endpoint of the inner leader intersects the previously placed label.
If there is no intersection, we use that label position.
Otherwise, we draw an outer leader extending horizontally to the left starting from the endpoint of the inner leader until the label can be placed without overlap.
Obviously this algorithm takes only linear time.
Figure~\ref{fig:outerleaders} shows the resulting labelings for four different slopes. 
We note that depending on the slope $\theta$ of the inner leaders other methods for routing the outer leaders might yield more pleasing external labelings. 
This is, however, beyond the scope of this paper.

\begin{figure}[tbp]
	\centering
	\subfloat[$\theta=0$]{\includegraphics[page=1]{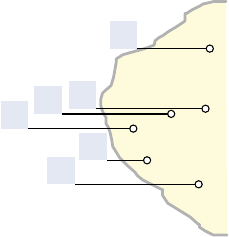}}
	\hfill
	\subfloat[$\theta=\pi/3$]{\includegraphics[page=2]{external-routing}}
	\hfill
	\subfloat[$\theta=7\pi/4$]{\includegraphics[page=3]{external-routing}}
	\hfill
	\subfloat[$\theta=11\pi/6$]{\includegraphics[page=4]{external-routing}}
	\caption{Examples for routing the outer leaders and placing the external labels for different slopes.}
	\label{fig:outerleaders}
\end{figure}

\subsection {Non-square labels}\label{sec:non-square}

Square labels are not very realistic in most map-labeling applications. Their use is justified by the observation that if all labels are homothetic rectangles, we can scale the plane in one dimension to obtain square labels without otherwise changing the problem.
Nonetheless, reality is not quite that simple, for two reasons: firstly, the scaling does alter inter-point distances, so if we wish to parametrize our solution by $d_{\min}$ we need to take this into account. Secondly, in real-world applications, labels may arguably have the same height, but not usually the same width.

If all labels are homothetic rectangles with a height of $1$ and a width of $w$, scaling the plane by a factor $1/w$ in the horizontal direction potentially decreases the closest interpoint distance by the same factor.
Now, the number of points in a unit-area region that do not contain each
other's potential labels is bounded by $\delta w$, immediately yielding a
result of $O(n^3(\log n + \iota(n,\delta w,\theta)))$ using exactly the same approach.

When all labels have equal heights but may have arbitrary widths, we conjecture
that a variation of our approach will still work, but a careful analysis of the
intricacies involved is required. If the labels may also have arbitrary heights
the problem is open. It is unclear if there is a polynomial time solution in
this case.

\subsection {Obstacles}\label{sec:obst}

In this paper, we have considered only abstract point sets to be labeled, using leaders that are allowed to go anywhere, as long as they do not intersect any internal labels.
While this is justified in some applications (e.g., in anatomical drawings, it is common practice to ignore the drawing when placing the leaders, as they are very thin and do not occlude any part of the drawing), in others this may be undesirable (in certain map styles, leaders may be confused for region boundaries or linear features).
As a solution, we may identify a set of polygonal \emph {obstacles} in the map, that cannot be intersected by leaders or internal labels.

In this setting, obviously not every input has a valid labeling: a point that lies inside an obstacle can never be labeled, or obstacles may surround points or force points into impossible configurations in more complex ways. Nonetheless, we can test whether an input has a valid labeling and if so, compute the labeling that maximizes the number of internal labels in polynomial time with our approach.

The main idea is to preprocess the input points in a similar way as in the
beginning of Section~\ref{sec:Other_Directions}. Whenever a point has a
potential leader that intersects an obstacle, it must be labeled internally;
similarly, whenever a point has a potential internal label that intersects an
obstacle, it must be labeled externally. If we include such ``forced'' leaders
or labels into our set of obstacles and apply this approach recursively, we
will either find a contradiction or be left with a set of points whose
potential leaders and potential internal labels do not intersect any obstacle,
and we can apply our existing algorithm on this point set.

The same approach may be used for point sets that are not in general position: if we disallow leaders that pass through other points, they are forced to be labeled internally. Note that this again may result in situations where no valid labeling exists.



\small
\section*{Acknowledgments}
M.L. and F.S. are supported by the Netherlands Organisation for Scientific
Research (NWO) under grant 639.021.123 and 612.001.022, respectively.


\bibliographystyle{abbrv}
\bibliography{bibliography,bibliography2}

\end{document}